\renewcommand{\Re}{\mathbb{R}}
\newcommand{\KK}{{\cal K}}
\renewcommand{\K}{{\cal M}}
\renewcommand{\P}{{\cal P}}
\renewcommand{\defset}[2]{\bigbrace{#1\ : \ #2 }}
\newcommand{\Cp}{\reals^{n_p}}
\renewcommand{\C}{\Phi}
\begin{document}

\title{\bf
Robust Supervisory Control for Uniting Two Output-Feedback Hybrid Controllers with Different Objectives}
\author{Ricardo G. Sanfelice\thanks{Department of Aerospace and Mechanical Engineering, University of Arizona
1130 N. Mountain Ave, AZ 85721, USA (Tel: 520-626-0676; e-mail: sricardo@u.arizona.edu). Research partially supported by the National Science Foundation under CAREER Grant no. ECS-1150306 and by the Air Force Office of Scientific Research under Grant no. FA9550-12-1-0366.} 
and 
Christophe Prieur\thanks{Department of Automatic Control, Gipsa-lab, 961 rue de la Houille Blanche, BP 46,
38402 Grenoble Cedex, France (e-mail: christophe.prieur@gipsa-lab.grenoble-inp.fr). Research partially supported by HYCON2 Network of Excellence ``Highly-Complex and Networked Control Systems,'' grant agreement 257462.}}

\maketitle

\begin{abstract}
The problem of robustly, asymptotically stabilizing a point (or a set) with
two output-feedback hybrid controllers is considered.
These control laws may have different objectives, e.g., 
the closed-loop systems resulting with each controller
may have different attractors.
We provide a control algorithm that combines the
two hybrid controllers to accomplish the stabilization task.
The algorithm consists
of a hybrid supervisor that, based on the values of plant's outputs and 
(norm)
state
estimates, selects the hybrid controller that should be applied to the plant.
The accomplishment of the stabilization task relies on an output-to-state
stability property induced by the controllers, which
enables the construction of an estimator for the norm of the plant's state.
The algorithm
is motivated by and applied to robust, semi-global stabilization problems
uniting
two
controllers.
\end{abstract}

\section{Introduction}
\label{sec:intro}

\subsection*{Background and Motivation}
Many control applications cannot be solved by means of a single
state-feedback controller. As a consequence, control algorithms
combining more than one controller have been thoroughly investigated
in the literature. Particular attention has been given to the problem of uniting local and
global controllers, in which two control laws are used: one that is
supposed to work only locally, perhaps guaranteeing good
performance, and another that is capable of steering the system
trajectories to a neighborhood of the operating point, where the
local control law works. Different strategies are possible to tackle this problem. In \cite{PanEzalKrenerKokotovic_TAC_01},
this problem is solved by 
patching
 together a local optimal controller and a global controller designed using backstepping. In \cite{MorinMurrayPraly:NOLCOS:98}, a static time-invariant controller was designed 
 by smoothly blending
 global and local controllers.  In \cite{AndrieuPrieur10}, 
two control-Lyapunov functions are combined
to design a global stabilizer for a class of nonlinear systems.

The use of discrete dynamics may be necessary when piecing together 
local and global controllers (e.g., see the example in \cite{Prieur01}, 
where 
local and global continuous-time
controllers cannot be united using a continuous-time supervisor). 
This additional requirement leads to a control scheme with 
mixed discrete/continuous dynamics, see \cite{TeelKapoor97CDC},
\cite{Prieur01}, and \cite{Efimov.06.Automatica.Uniting}, where
controllers to piece together two given state-feedback laws are proposed.
Based on these techniques,
different applications have been considered, such as the stabilization of the inverted pendulum \cite{SanfeliceTeelGoebelPrieur06ACC} and the position and orientation of a mobile robot \cite{Sanfelice.ea.08.CDC.Supervisor}.
These ideas have been extended in
\cite{SanfeliceTeel07ACC} to allow for the combination of multi-objective
controllers, including state-feedback laws as well as open-loop control laws.
More recently, they have also been extended to the case when, rather than state-feedback, only
output-feedback controllers are available
\cite{PrieurTeel:ieee:11}.
A trajectory-based approach for the design of robust multi-objective controllers
that regulate a particular output to zero while keeping another output within
a prescribed limit was introduced in \cite{Efimov.ea.09}.
In the context of performance, a trajectory-based approach
was also employed in \cite{Efimov.ea.IJRNC.11}
to generate
dwell-time and hysteresis-based 
control strategies that guarantee an input-output stability property characterizing
closed-loop system performance.

In this paper, we study the robust stabilization
of nonlinear systems of the form
\begin{equation}\label{eq:plant}
\mathcal{P} : \quad
\dot{\xi}  = f_{p}(\xi,u_{p})
\qquad
\xi \in \Cp, \ u_p \in \reals^{m_p}
\end{equation}
via the combination of two hybrid controllers
that use only measurements of outputs of the plant.
The motivation of such a problem is 
twofold.  
On the one hand, the
impossibility of robustly stabilizing an equilibrium point (or set)
with smooth or discontinuous 
control laws (see, e.g.,
\cite{Brockett83})
precludes utilizing uniting controllers that combine smooth or discontinuous (non-hybrid)
state-feedback laws.
On the other hand, the typical limitation of measuring all of the
plant variables for state-feedback control demands the use of output-feedback controllers
as well as 
the use of multiple controllers that can be combined in a systematic manner
to accomplish a given task. 
These challenges emerge
in stabilization problems with information and actuation constraints.
For instance, in motion planning of autonomous vehicles 
for navigation in cluttered environments, in addition to 
unavoidable input constraints,
obstacles introduce topological constraints that restrict the sensing range.
In such scenarios, control algorithms may combine information from multiple 
sensors and select the most appropriate control strategy to execute.
Due to the different properties induced by the individual controllers in such 
applications, we refer to the problem studied in this paper
as the problem of {\em uniting two output-feedback hybrid controllers
with different objectives}, where
one of the controllers steers the trajectories to a set (this is the objective of the global controller)
and another controller asymptotically stabilizes a different target set (this is the objective of
the local controller); cf. \cite{Efimov.ea.09}.

\subsection*{Contributions}

We propose a hybrid controller to solve the problem of uniting two output-feedback laws with different objectives.
Figure~\ref{fig:ControlArchitecture} depicts the proposed solution,
which consists of  
supervising the two output hybrid controllers, which are denoted by $\KK_0$ and $\KK_1$,
with ``local" and ``global" stabilizing capabilities, respectively.
By combining a discrete and several continuous states, for any compact set of initial conditions, we design
a robustly stabilizing supervisory algorithm with a basin of attraction containing the given compact set of initial conditions,
i.e., the controller renders a target set
semi-globally 
asymptotically
stable.
The  supervisory algorithm consists of a hybrid controller,
which is denoted by $\KK_s$, and uses logic-based switching to unite controllers $\KK_0$ and $\KK_1$.
Our approach builds from the ideas in \cite{PrieurTeel:ieee:11} on uniting output-feedback
continuous-time controllers and
in 
\cite{Morse.96.TAC.SupervisoryP1,Morse.97.TAC.SupervisoryP2,Hespanha.02.Automatica.Supervisory,Sanfelice.ea.08.CDC.Supervisor}
on supervisory control algorithms.

\begin{figure}[h]
  \begin{center}  
  \psfragfig*[width=.48\textwidth]{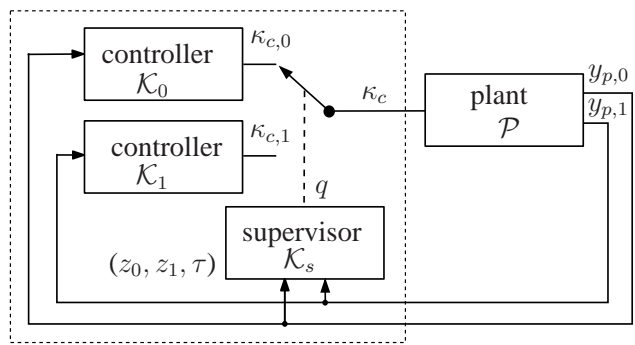}
  {            
    \psfrag{tagsup}[][][0.9]{\hspace{-0.18in} supervisor}
    \psfrag{tag1}[][][0.9]{\hspace{-0.13in} controller}
    \psfrag{global}[][][0.9]{\hspace{-0.13in} controller}
    \psfrag{tagC1}[][][0.9]{\hspace{-0.28in} $\KK_0$}
    \psfrag{tagC}[][][0.9]{\hspace{-0.22in} $\KK_1$}
    \psfrag{y}[][][0.9]{\ \ \ $y$}
    \psfrag{y0}[][][0.9]{\!\!\! $y_{p,0}$}
    \psfrag{y1}[][][0.9]{\!\!\! $y_{p,1}$}
    \psfrag{q}[][][0.9]{$q$}
    \psfrag{v}[][][0.9]{\ \ \ $\KK_s$}
    \psfrag{P}[][][0.9]{\ \ $\cal P$}
    \psfrag{K0}[][][0.9]{}
    \psfrag{K1}[][][0.9]{}
    \psfrag{Ks}[][][0.9]{\hspace{-0.4in} ($z_0, z_1, \tau$)}
    \psfrag{tagpl}[][][0.9]{\!\! plant}
    \psfrag{u0}[][][0.9]{$\kappa_{c,0}$}
    \psfrag{u1}[][][0.9]{$\kappa_{c,1}$}
    \psfrag{u}[][][0.9]{$\kappa_{c}$}
    }
    \caption{Proposed control approach for Problem~($\star$).}             
    \label{fig:ControlArchitecture}
  \end{center}             
\end{figure}      

The features of the proposed hybrid supervisor include:
\begin{itemize}
\item
{\it Uniting of hybrid controllers:} 
controllers $\KK_0$ and $\KK_1$ are not restricted
to being continuous-time controllers; instead, they can be hybrid controllers
involving continuous and discrete variables.
In this way, the proposed solution extends the technique of uniting two continuous-time
controllers available in the literature to the case when the individual controllers
are hybrid, which, in turn, permits applying 
the uniting method to plants that cannot be
robustly stabilized 
by smooth or discontinuous control laws.
\item
{\it Controllers with different objectives:}
controllers $\KK_0$ and $\KK_1$ can have different objectives in the
sense that they may stabilize different attractors. 
This enables the systematic design of controllers that
steer trajectories to a certain point (or set) from where local controllers
can take over and stabilize the desired point (or set).
This procedure has been heuristically used in robotic applications \cite{Burridge.ea.96.ThrowAndCatch}.
\item
{\it Output feedback without underlying input-output-to-state stability assumption on the plant:}
for the solution of the uniting problem of interest (see Problem~($\star$) in Section~\ref{sec:Uniting})
 the proposed hybrid supervisor requires an output-to-state stability
property for each of the closed-loop systems resulting when the 
individual controllers are used.  
This assumption is weaker that the input-output-to-state stability condition on the plant
in  \cite{PrieurTeel:ieee:11}.
The mechanism enabling this relaxation 
is a timer state included in the proposed hybrid supervisor.
\end{itemize}
In this work, each of the output-feedback hybrid controllers is known to
confer certain properties to each of the resulting closed-loop systems:
the first controller renders, for the plant state, a target compact set
locally asymptotically stable, while the second controller renders
a particular compact set attractive.
As a difference to the controllers in 
\cite{SanfeliceTeel07ACC,Efimov.ea.09,Efimov.ea.IJRNC.11},
the individual controllers can be hybrid and their objectives given in terms of compact sets rather than
equilibrium points (the latter feature actually enables the use of hybrid controllers as
these typically stabilize sets larger than a single point; see \cite{Goebel.ea.09.CSM} for a discussion).
Note that as a difference to \cite{Efimov.ea.IJRNC.11}, where switching
times are optimally computed,
the objective of the proposed hybrid supervisor is
to robustly stabilize a desired compact set.
Our construction exploits the fact that, as established in
\cite{Sontag.Wang.97} for continuous-time nonlinear systems
and generalized to hybrid systems in \cite{Cai.Teel.2011.SCL,Cai.Teel.ACC.06},
this property implies the existence of an estimator of the norm of the state.
We work within the hybrid systems framework of
\cite{Goebel.ea.09.CSM} (see also \cite{Hybrid01,GoebelTeel06})
and employ results on robust asymptotic stability
reported in \cite{GoebelTeel06}.
Two examples involving systems
with input constraints and limited information
are used throughout the paper
to illustrate
the application of our results.

\subsection*{Organization of the paper}

The remainder 
of the
 paper is organized as follows.  After basic notation
is introduced, Section~\ref{sec:HybridSystems} presents
a short description of the framework used for analysis.
The main result follows in Section~\ref{sec:Uniting}.
This section 
starts by introducing the problem to be solved,
the proposed formulation of a solution, and the required assumptions.
In addition to presenting a  design procedure for the supervisor,
it establishes a robust stability property of the closed-loop system.
Examples are introduced throughout the paper to illustrate
the ideas.  In Section \ref{sec:examples},  
the proposed hybrid supervisor is applied to the systems
in these examples.

We use the following notation and definitions throughout the paper.
$\reals^{n}$ denotes $n$-dimensional Euclidean space.
$\realsplus$ denotes the nonnegative real numbers, i.e.,
 $\realsplus=[0,\infty)$.
$\nats$ denotes the natural numbers including $0$, i.e.,
 $\nats=\left\{0,1,\ldots \right\}$.
$\ball$ denotes the open unit ball in Euclidean space centered at the origin.
Given a vector $x\in \reals^n$, $|x|$ denotes the Euclidean vector norm.
Given a set $S$, $\ol{S}$ denotes its closure.
Given a set $S\subset \reals^n$ and a point $x \in \reals^n$,
 $|x|_{S}:= \inf_{y \in S} |x-y|$. 
 The notation $F: S\rightrightarrows S$
 indicates that $F$ is a set-valued map that 
 maps points in $S$ to subsets of $S$.
For simplicity in the notation, 
given vectors $x$ and $y$,
we write,
when convenient, 
$[x^\top y^\top]^\top$
with the shorthand notation $(x,y)$.
A function $\alpha : \realsplus \to \realsgeq$ is said to belong
 to the class ${\mathcal K}$ if it is continuous, zero at zero, and
 strictly increasing.
A function $\alpha : \realsplus \to
\realsgeq$ is said to belong to the class ${\mathcal K}_{\infty}$ if it
belongs to the class ${\mathcal K}$ and is unbounded.
A function $\beta: \realsplus \times \realsplus \rightarrow
 \realsplus$ is said to belong to class $\mathcal{K}\mathcal{L}$ if
 it is nondecreasing in its first argument, nonincreasing in its
 second argument, and $\lim_{s \searrow 0} \beta(s,t) =
 \lim_{t \rightarrow \infty} \beta(s,t) = 0$.
A function $\beta:\reals_{\geq 0} \times \reals_{\geq 0} \times
 \reals_{\geq 0} \rightarrow \reals_{\geq 0}$ is said to belong to
 class $\mathcal{K}\mathcal{L}\mathcal{L}$ if, for each $r \in
 \reals_{\geq 0}$, the functions $\beta(\cdot,\cdot,r)$ and
 $\beta(\cdot,r,\cdot)$ belong to class $\mathcal{K}\mathcal{L}$.

\section{Hybrid Systems Preliminaries}
\label{sec:HybridSystems}

In this paper, we consider hybrid systems as in
\cite{Goebel.ea.09.CSM,Goebel.ea.11},
where solutions can evolve
continuously (flow) and/or discretely (jump) depending on the
continuous and discrete dynamics  
of the hybrid systems,
and the sets where those dynamics
apply. In general, a hybrid system $\HS$ is given by data $(h,C,F,D,G)$
and can be written in the compact form
$$
\HS : \quad
\left\{
\begin{array}{lcl}
\dot{\chi} &\in& F(\chi) \qquad \chi\in C \\
 \nonumber \chi^+ &\in& G(\chi) \qquad \chi\in D\\
 y &=& h(\chi),
\end{array}
\right.
$$
where $\chi$ is the state taking values from $\reals^n$,
the set-valued map $F$ defines the continuous dynamics on the
set $C$ and the set-valued map $G$ defines the discrete dynamics on
the set $D$. The notation $\chi^+$ indicates the value of the state
$\chi$ after a jump\footnote{Precisely, 
$\chi^+ = \chi(t,j+1)$.}.
The function
$h$ defines the output.
Solutions to $\HS$ will be given on {\em hybrid time domains}, 
which are subsets $\SSS$ of $\realsgeq \times \nats$
that, for every
$(T,J)\in \SSS$, $\SSS\ \cap\ \left( [0,T]\times\{0,1,\ldots J\}
\right)$ can be written as
$\bigcup_{j=0}^{J-1} \left([t_j,t_{j+1}],j\right)$
for some finite
sequence of times $0=t_0\leq t_1 \ldots \leq t_J$.  
A solution 
to $\HS$ will consist of a hybrid time domain $\dom \chi$
and a {\em hybrid arc} $\chi:\dom \chi\to \reals^n$, 
which is a function with the property that $\chi(t,j)$ is locally absolutely continuous 
on $I_j := \defset{t}{(t,j)\in\dom\chi}$ for each $j\in \nats$,
satisfying the dynamics imposed by $\HS$.
More precisely,
the following hold:
\begin{itemize}
\item[]{(S1)} For each $j\in\nats$ such that $I_j$ has nonempty interior
\begin{equation}
\label{S1}
\begin{array}{l}
\chi(t,j) \in C\quad \mbox{ for all } t \in [\min I_j,\sup I_j) \\
\dot{\chi}(t,j)\in F(\chi(t,j)) \quad \mbox{ for almost all } t \in I_j;
\end{array}
\end{equation}
\item[]{(S2)} For each $(t,j)\in \dom \chi$ such that $(t,j+1)\in \dom \chi$,
\begin{equation}
\label{S2}
\chi(t,j) \in D, \ \ \ \chi(t,j+1)\in G(\chi(t,j)).
\end{equation}
\end{itemize}
Hence, 
solutions are parameterized by $(t,j)$, where
$t$ is the ordinary time and $j$ 
corresponds to the number of jumps.
A solution $\chi$ to $\HS$ is said to be {\it complete} if $\dom \chi$ is
unbounded, {\it Zeno} if it is complete but the projection of $\dom
\chi$ onto $\realsgeq$ is bounded, and {\it maximal} if there does not
exist another hybrid arc $\chi'$ such that $\chi$ is a truncation of
$\chi'$ to some proper subset of $\dom \chi'$.  For more details about
this hybrid systems framework, we refer the reader to
\cite{Goebel.ea.09.CSM}.

When the data $(h,C,F,D,G)$ of $\HS$
satisfies the conditions given next,
hybrid systems are well posed in the sense that they inherit several good structural properties
of their solution sets. These
include sequential compactness of the solution set, closedness of perturbed and
unperturbed solutions, etc.
We refer the reader to \cite{GoebelTeel06} (see also \cite{Hybrid01})
and \cite{SanfeliceGoebelTeel05SIAM}
for details on and consequences of these conditions.

\begin{definition}{\bf (Well-posed hybrid systems)}
\label{def:ba}
The hybrid system $\HS$ with data $(h,C,F,D,G)$
is said to be {\em well posed} if it satisfies the following {\em hybrid basic conditions}:
 the sets $C$ and $D$ are closed,
the mappings $F:C \rightrightarrows \reals^n$ and $G:D\rightrightarrows \reals^n$ are outer semicontinuous and locally bounded,\footnote{A set-valued mapping $G$ defined on $\reals^n$ is {\it outer
 semicontinuous} if for each sequence $x_{i} \in \reals^n$ converging to a
point $x \in \reals^n$ and each sequence $y_{i} \in G(x_{i})$ converging to a
point $y$, it holds that $y \in G(x)$. It is {\it locally bounded} if,
for each compact set $\K \subset \reals^n$ there exists $\mu>0$ such that
$
\cup_{x \in \K} G(x) \subset \mu \ball$.
}
$F(x)$ is nonempty and convex for all $x\in C$, $G(x)$ is nonempty for
all $x\in D$,
and 
$h:\reals^n\to \reals^m$ 
is continuous.
\end{definition}

\section{Uniting Two Output-Feedback Hybrid Controllers Using a Hybrid Supervisor}
\label{sec:Uniting}

\subsection{Problem statement, solution approach, and assumptions}

We consider the stabilization of a compact set for
nonlinear control systems of the form \eqref{eq:plant}
with only measurements of two outputs $y_{p,0}$ and $y_{p,1}$ given by functions of the state $h_0$ and $h_1$,
respectively, 
where $f_p$ is a continuous function.
That is, we are interested in solving the following problem:
\begin{itemize}
\item[($\star$)]  
Given compact sets $\A_0,\K_0 \subset\reals^{n_p}$ 
and continuous functions $h_0,h_1$ defining outputs $y_{p,0}=  h_0(\xi)$ 
and $y_{p,1}= h_1(\xi)$ of \eqref{eq:plant},
design an output feedback controller
that renders $\A_0$ asymptotically stable 
with a basin of attraction containing $\K_0$.\footnote{It is desired that 
the basin
of attraction 
of the closed-loop system
contains $\K_0$ when 
projected onto $\reals^{n_p}$.}
\end{itemize}

As shown
 in Figure~\ref{fig:ControlArchitecture},
the proposed approach to solve this problem consists of 
supervising two output hybrid controllers, which are denoted by $\KK_0$ and $\KK_1$,
with ``local" and ``global" stabilizing capabilities, respectively, 
which are properties that will be made precise below.
The  supervisory algorithm consists of a hybrid controller,
which is denoted by $\KK_s$, that uses logic-based decision
making to unite controllers $\KK_0$ and $\KK_1$.
The individual controllers $\KK_0$ and $\KK_1$ have state $\zeta_0$ and
$\zeta_1$, both in $\reals^{n_c}$, respectively.\footnote{The case
where the hybrid controllers have a dynamical state $\zeta_0$
(respectively, $\zeta_1$) in a set $\reals^{n_{c0}}$ (respectively,
$\reals^{n_{c1}}$) of different dimension $n_{c0}\neq n_{c1}$ 
can be treated similarly by embedding both sets into the set of larger
dimension.}
For each $i \in \{ 0,1 \}$, the hybrid controller
$\KK_i=(\kappa_{c,i},C_{c,i},f_{c,i},D_{c,i},g_{c,i})$
is given by
\begin{equation}
\label{eq:controller}
\KK_i: \quad \left\{
   \begin{array}{llll}
      \dot{\zeta}_{i} & =  & f_{c,i}(u_{c,i},\zeta_i)
 &  \ (u_{c,i},\zeta_i) \in C_{c,i} \\
      \zeta_i^{+} & \in &  g_{c,i}(u_{c,i},\zeta_i) & \ (u_{c,i},\zeta_i) \in D_{c,i}\\
              y_{c,i}& = & \kappa_{c,i}(u_{c,i},\zeta_i),
   \end{array}
\right.
\end{equation}
where 
$\zeta_i \in \Re^{n_{c}}$ is the $i$-th controller's state,
$u_{c,i} \in \Re^{m_{c,i}}$ the $i$-th controller's input,
$C_{c,i}$ and $D_{c,i}$ are subsets of $\reals^{m_{c,i}}\times\reals^{n_c}$,
$\kappa_{c,i}:\reals^{n_c}\to\reals^{m_p}$ is the $i$-th controller's output,
$f_{c,i}:C_{c,i} \rightarrow \Re^{n_{c}}$, and
$g_{c,i}:D_{c,i} \rightrightarrows \Re^{n_{c}}$.
For each $i \in \{ 0,1 \}$, the $i$-th controller $\KK_i$ measures the plant's output $y_{p,i} =
h_i(\xi)$ only and,
via the
assignment $u_{c,i}=y_{p,i}$, $u_{p}=y_{c,i}$
defines 
the hybrid closed-loop system
denoted by $({\cal P},\KK_i)=(h_i,C_i,f_i,D_i,g_i)$
with state 
$(\xi, \zeta_i)\in \reals^n$,
$n = n_p + n_c$,
and given by
\begin{equation}
  \begin{array}{lcll}
  \renewcommand{\arraystretch}{1.1}
 \ \matt{\dot{\xi}\\ \dot{\zeta}_i}  =  f_i(\xi,\zeta_i):=
 \matt{f_{p}(\xi,\kappa_{c,i}(h_i(\xi),\zeta_i))  \\ f_{c,i}(h_i(\xi),\zeta_i)}
    & (\xi,\zeta_i)\! \in\! C_i,
\\
\\
  \renewcommand{\arraystretch}{1.5}
\matt{ \xi^{+} \\ \zeta_i^{+}}  \in  g_i(\xi,\zeta_i) :=
\matt{\xi \\ g_{c,i}(h_i(\xi),\zeta_i)}
& (\xi,\zeta_i)\! \in\! D_i, \\
y_i =  h_i(\xi),
\end{array}
\label{eq:closed-loop}
\end{equation}
where $y_i$ is the output, 
$$\begin{array}{c}
C_i:=\defset{(\xi,\zeta_i)}{\xi\in \Cp,(h_i(\xi),\zeta_i) \in C_{c,i}},\\
D_i := \defset{(\xi,\zeta_i)}{\xi\in \Cp,(h_i(\xi),\zeta_i) \in D_{c,i}}.\end{array}$$
(An assignment different from $u_{c,i}=y_{p,i}$, $u_{p}=y_{c,i}$ will be 
employed when a hybrid supervisor is used -- see Theorem \ref{thm:NominalAS}.) 
We say that the controller $\KK_i$ is well posed when the resulting
closed-loop system from controlling the plant \eqref{eq:plant} with continuous right-hand side 
is well posed
as in Definition~\ref{def:ba}.

The controllers $\KK_i$ are assumed to induce the properties that, for $i=0$, a compact set
$\A_0 \times \C_0 \subset \reals^n$, where $\C_0 \subset \reals^{n_c}$, is
locally asymptotically stable for $({\cal P},\KK_0)$ and, for $i=1$, a compact
set $\A_1 \times \C_1 \subset \reals^n$, $\C_1 \subset
\reals^{n_c}$, is attractive for $({\cal P},\KK_1)$.
For a combination of both controllers to work, 
the set $\A_1$ will have to be contained in the plant component of the basin of attraction of $\KK_0$.
In such a case, the said properties of $\KK_0$ and $\KK_1$ readily suggest that, 
when far away from $\A_0$, $\KK_1$ can be used to steer the plant's state to a region from where
$\KK_0$ can be used to
asymptotically
 stabilize $\A_0$.
However,
these controllers cannot be combined using
supervisory control techniques in the literature (see, e.g., \cite{Sanfelice.ea.08.CDC.Supervisor} and the references therein)
due to being hybrid and to the lack of full measurements of $\xi$.
We resolve this issue by designing two norm observers.
The existence of such observers is guaranteed when the hybrid controllers induce an output-to-state stability (OSS) property.
More precisely, this OSS property 
assures
 the 
existence of an (smooth)
exponential-decay OSS-Lyapunov function $V_i$ with respect
to $\A_i\times\C_i$ for $(\P,\KK_i)$; see \cite[Theorem 3.1]{Cai.Teel.2011.SCL}.
As defined in \cite[Definition 2.2]{Cai.Teel.2011.SCL},
$V_i:\reals^n\to\realsgeq$ is
such that there exist class-${\cal K}_{\infty}$ functions $\alpha_{i,1}, \alpha_{i,2}$,
class-$\cal K$ function $\gamma_{i}$,
and $\eps_i \in (0,1]$ satisfying:
for all $(\xi,\zeta_i) \in \reals^n$,
\begin{equation}
\begin{array}{rcl}
\alpha_{i,1}(|(\xi,\zeta_i)|_{\A_i\times\C_i}) \leq V_i(\xi,\zeta_i) \leq \alpha_{i,2}(|(\xi,\zeta_i)|_{\A_i\times\C_i});
\label{eqn:Vibound}
\end{array}\end{equation}
for all  $(\xi,\zeta_i) \in C_i$,
\begin{equation}
\begin{array}{rcl}
\langle \nabla V_i(\xi,\zeta_i), f_i(\xi,\zeta_i) \rangle \leq - \eps_i V_i(\xi,\zeta_i) + \gamma_{i}(|h_i(\xi)|);
\label{eqn:VidotBound}
\end{array}\end{equation}
for all $(\xi,\zeta_i) \in D_i$,
\begin{equation}
\begin{array}{rcl}
\max_{g \in g_i(\xi,\zeta_i)}V_i(g) - V_i(\xi,\zeta_i) \leq - \eps_i V_i(\xi,\zeta_i) + \gamma_{i}(|h_i(\xi)|).
\label{eqn:ViChangeBound}
\end{array}\end{equation}
The next assumption guarantees that the resulting
closed-loop systems $({\cal P},\KK_0)$ and $({\cal P},\KK_1)$ 
satisfy these properties.

\begin{assumption}
\label{assumption:1}
Given a compact set $\A_0 \subset \reals^{n_p}$ and
continuous functions 
$f_p:\reals^{n_p}\times\reals^{m_p}\to\reals^{n_p}$,
$h_0:\reals^{n_p}\to\reals^{m_{c,0}}$, 
$h_1:\reals^{n_p}\to\reals^{m_{c,1}}$, 
where $h_0(\xi) = 0$ for all  $\xi \in \A_0$,
assume there exist 
compact sets 
$\A_1 \subset \reals^{n_p}$, $\C_0, \C_1 \subset \reals^{n_c}$, where
$h_1(\xi) = 0$ for all 
$\xi \in \A_1$,
such that:
\begin{enumerate}
\item
A well-posed hybrid controller
$\KK_0=$\\$(\kappa_{c,0},C_{c,0},f_{c,0},D_{c,0},g_{c,0})$
for the plant output $y_{p,0} = h_0(\xi)$ 
inducing the following properties exists: 
\begin{enumerate}
\item
Stability: For each $\eps >0$  there exists $\delta >0$ such that every solution $(\xi,\zeta_0)$
to $(\P,\KK_0)$ with \
$|(\xi(0,0),\zeta_0(0,0))|_{\A_0 \times \C_0} \leq \delta$ satisfies $|(\xi(t,j),\zeta_0(t,j))|_{\A_0 \times {\C}_0} \leq \eps$ for all $(t,j) \in \dom (\xi,\zeta_0)$;\footnote{The plant state $\xi$ is parameterized by $(t,j)$ since
it is a 
state
component of the hybrid system $({\cal P},\KK_0)$, 
whose solutions
are defined on hybrid time domains.}
\item
Attractivity: There exists $\mu >0$ such that every solution $(\xi,\zeta_0)$
to $(\P,\KK_0)$ with \\
$|(\xi(0,0),\zeta_0(0,0))|_{\A_0 \times {\C}_0} \leq \mu$ is complete and satisfies 
$$\lim_{t+j \to \infty} |(\xi(t,j),\zeta_0(t,j))|_{\A_0 \times {\C}_0} = 0;$$
\item
Output-to-state stability 
(OSS): The hybrid system $({\cal P},\KK_0)$ with output $y_{p,0}=h_0(\xi)$ 
is output-to-state stable with respect to $\A_0\times\C_0$. 
Let $V_0$ denote an OSS-Lyapunov function 
associated with this property, and let $\gamma_0 \in \classK$ and $\eps_0>0$
satisfy \eqref{eqn:VidotBound} and \eqref{eqn:ViChangeBound} with $i=0$.
Let
$\eps_{0,b}>0$ define an estimation of the 
basin of attraction ${\cal B}_0$ of $(\P,\KK_0)$
of the form
$\defset{(\xi,\zeta_0)}{V_0(\xi,\zeta_0) \leq \eps_{0,b}}$.
\end{enumerate}
\item
A well-posed hybrid controller $\KK_1=$\\$(\kappa_{c,1},C_{c,1},f_{c,1},D_{c,1},g_{c,1})$
for the plant output $y_{p,1} = h_1(\xi)$
inducing the following properties exists:
\begin{enumerate}
\item Attractivity: Every maximal solution $(\xi,\zeta_1)$
to $(\P,\KK_1)$ 
is complete and satisfies
$$\lim_{t+j \to \infty}|(\xi(t,j),\zeta_1(t,j))|_{\A_1 \times {\C}_1}=0;$$
\item Output-to-state stability: The hybrid system $({\cal P},\KK_1)$ with output $y_{p,1}=h_1(\xi)$ 
is output-to-state stable with respect to $\A_1\times\C_1$. 
Let $V_1$ denote an OSS-Lyapunov function associated with this property.
\end{enumerate}
\item
There exist  $\eps_{0,a},\eps_{1,b}>0$ such that
$\eps_{0,a} < \eps_{0,b}$ and,
 for each solution $(\xi,\zeta_0)$ to $(\P,\KK_0)$
from
$$
\defset{\xi \in \reals^{n_p}}{V_1(\xi,\zeta_1) \leq \eps_{1,b},\ \zeta_1 \in \C_1} \times
\C_0,
$$
we have
\begin{equation}\label{eqn:V1set}
\gamma_0(|h_0(\xi(t,j))|) < \eps_{0,a}\eps_0  \ \ \forall (t,j)\in\dom (\xi,\zeta_0).
\end{equation}
\end{enumerate}
\end{assumption}

\begin{remark}\label{remark:Assumptions}
Assumption~\ref{assumption:1} assures the existence of individual controllers
with enough properties so that the uniting problem of interest 
is at all solvable and the proposed approach provides a solution to it.
More precisely, 
items 1.a and 1.b are
required so that the local stability 
requirement in Problem ($\star$) is attainable
while
2.a is needed so that the semi-global stability 
requirement therein can be met.
The other assumptions are particular to our proposed
solution.
Items 
 1.c 
  and 2.b are imposed so that norm observers  
can be constructed.
Item 3
 permits the combination of the two
controllers using a hybrid supervisor
by ensuring that the compact set $\A_1$, which is the plant component of the 
set rendered attractive with the controller $\KK_1$,
is included in the basin of
attraction of the closed-loop system with the controller $\KK_0$. 
In this way, $\A_0 \times \Phi_0$ can be asymptotically stabilized
once $\KK_1$ steers the plant state nearby $\A_1$.
Note that 
items 1.a, 1.b,
and 2.a
are the hybrid version of the assumptions in \cite{PrieurTeel:ieee:11}.
Items 
1.c
and 2.b relax the assumptions in \cite{PrieurTeel:ieee:11}
as rather than asking for 
input-output-to-state stability (IOSS) of the plant, 
they impose
OSS properties of the  closed-loop systems $({\cal P},\KK_0)$ and $({\cal P},\KK_1)$.
\end{remark}

The stabilizing property induced by controller $\KK_0$
in Assumption~\ref{assumption:1} holds when 
the nonlinear system is locally stabilizable to the set $\A_0$ by hybrid feedback.
Note that hybrid feedback permits stabilizing a larger class
of systems than standard continuous feedback.
Examples of systems that can be asymptotically stabilized
by hybrid feedback include the nonholonomic integrator and
Artstein circles \cite{PrieurGoebelTeel07},
the pendubot \cite{SanfeliceTeel07ACC},
and rigid bodies \cite{Mayhew.ea.10.TAC.a}.
The attractivity property induced by the controller $\KK_1$
in Assumption~\ref{assumption:1} holds when the trajectories
of the plant can be asymptotically steered to the set $\A_1$
(contained in the basin of attraction of the local controller).
Note that, as a difference to controller $\KK_0$, 
it is not required for controller $\KK_1$ to render the said set
stable.  This feature of the proposed controller
allows for the design of $\KK_0$ and $\KK_1$ separately, 
being
item 3 of Assumption~\ref{assumption:1} 
a common design 
constraint.

Next, we introduce an example and associated control problem
for which the supervision of 
two controllers with properties as in Assumption~\ref{assumption:1}
will be applied.

\begin{example}
\label{ex:Avoidance1}
Consider the stabilization of the point $\{\xi^*\}$
for the point-mass system
$\dot{\xi} =
u_p$,
where
$\xi\in\reals^2$ is the state and $u_{p} = [u_1\ u_2]^{\top}$ is the control input.\footnote{See \cite{Efimov.ea.09} where the problem of stabilizing a unicycle while ensuring obstacle avoidance is studied.}
A controller is to be designed to solve the following control problem: 
guarantee that the solutions to the plant
avoid a neighborhood around the point $\overline{\xi}$, 
which is
given by
$\cal N = \overline{\xi} +\hat{\alpha}\ball$
and represents an obstacle,
and that converge to the target point $\xi^*$.
Convergence to the target can be attained
by steering the solutions in the clockwise or in the counter-clockwise
direction around the obstacle, depending on the initial condition.
Measurements of the distance to the target may not be available
from points where the target is not visible
due to the presence of the obstacle.
Due to the topological constraint of the stabilization
task and the limited measurements,
a single controller or a controller uniting
two controllers with the same objectives 
would be difficult to design.

To solve the stated control problem, 
functions defining potential fields 
capturing the presence
of the obstacle and
vanishing at some point $\xi^\circ$ from where the target is visible, i.e.,
from points where there is a ``line-of-sight'' to the target point $\xi^*$, can be generated.
Then,
a gradient descent controller can be designed to steer the 
state of the point-mass system
to nearby the intermediate point $\xi^{\circ}$.
In this way, the point $\xi^{\circ}$ would define the set $\A_1$ and the gradient-descent 
controller would define $\KK_1$. 
This controller would use measurements of the functions defining the 
potential fields as well as their gradients.  These functions would define the plant's output $y_{p,1}$.
A particular construction 
of a hybrid controller implementing a robust gradient-descent-like strategy
and 
satisfying the conditions in Assumption~\ref{assumption:1}.2
is given in Section~\ref{ex:2}.
To satisfy the conditions in Assumption~\ref{assumption:1}.1,
a ``local'' controller capable of asymptotically stabilizing $\xi^*$
from nearby $\xi^\circ$ would play the
role of the controller $\KK_0$ above, with $\A_0$ given by $\{\xi^*\}$.
Due to $\xi^\circ$ being at a location unobstructed by the obstacle,
this controller could use relative position measurements
to the target, which would define the plant's output $y_{p,0}$.
Item 3 of Assumption~\ref{assumption:1} will be satisfied
by placing $\A_1$ in the basin of attraction induced by $\KK_0$.
\hfill $\triangle$
\end{example}

As pointed out in Remark~\ref{remark:Assumptions},
items 
1.c and 2.b 
in 
Assumption~\ref{assumption:1}
assure
OSS the existence of exponential-decay OSS-Lyapunov functions with respect
to $\A_i\times\C_i$ for $(\P,\KK_i)$.
As stated in \cite[Proposition 2]{Cai.Teel.ACC.06}, a norm estimator for the state $(\xi,\zeta_i)$ (and, hence, for $\xi$) exists. A particular construction is
\begin{equation}
\begin{array}{rcl}
\label{eqn:NormEstimator}
\begin{array}{llll}
\dot{z}_i & = & -\eps_i z_i + \gamma_i(|h_i(\xi)|) & \qquad (\xi,\zeta_i) \in C_i, \\
z_i^+ & = & (1-\eps_i) z_i + \gamma_i(|h_i(\xi)|) & \qquad (\xi,\zeta_i) \in D_i.
\end{array}
\end{array}
\end{equation}
In fact, given a solution $(\xi,\zeta_i)$ to $(\P,\KK_i)$, using \eqref{eqn:VidotBound} and \eqref{eqn:ViChangeBound},
for each $j \in \nats$ and for almost all $t \in I_j$, $I_j$ with nonempty interior, $(t,j) \in \dom (\xi,\zeta_i)$,
we have$\\[1em]
\frac{d}{dt} \left(V_i(\xi(t,j),\zeta_i(t,j)) - z_i(t,j) \right)\!\! \leq$\hfill\null \\[0.7em]\null\hfill$ - \eps_i \left(V_i(\xi(t,j),\zeta_i(t,j)) - z_i(t,j)\right),
\quad$\\[1em]
and, for each $(t,j) \in \dom  (\xi,\zeta_i)$ such that $(t,j+1) \in \dom  (\xi,\zeta_i)$, we have
$\\[1em]
V_i(\xi(t,j+1),\zeta_i(t,j+1)) - z_i(t,j+1) \leq$\hfill\null \\[0.7em]\null\hfill$ (1 - \eps_i) (V_i(\xi(t,j),\zeta_i(t,j)) - z_i(t,j)).
\quad$\\[1em]
Using the upper bound in \eqref{eqn:Vibound}, it follows that, for all $(t,j)\in\dom (\xi,\zeta_i)$,
$V_i(\xi(t,j),\zeta_i(t,j)) \leq  z_i(t,j) + \exp(-\eps_i t) (1 - \eps_i)^j  \left(V_i(\xi(0,0),\zeta_i(0,0)) - z_i(0,0)\right)
\leq 
z_i(t,j) + \exp(-\eps_i t) (1 - \eps_i)^j \left(\alpha_{i,2}(|(\xi(0,0),\zeta_i(0,0))|_{\A_i\times\C_i})\right.$\\$\left. - z_i(0,0)\right).
$
Assuming, without loss of generality, that $\alpha_{i,2}(s) \geq s$ for all $s \geq 0$ and defining
$
\beta_i(s,t,j) := 2\exp(-\eps_i t) (1 - \eps_i)^j \alpha_{i,2}(s)
$
gives
for any solution $(\xi,\zeta_i)$ to $(\P,\KK_i)$
\\[0.7em]$\null\quad
V_i(\xi(t,j),\zeta_i(t,j)) \leq$\refstepcounter{equation}\label{eqn:VdecreaseEstimator}\hfill$(\theequation)$\\[0.5em]\null\hfill$
z_i(t,j) + \beta_i(|(\xi(0,0),\zeta_i(0,0))|_{\A_i\times\C_i} + |z_i(0,0)|,t,j).
\quad$\\[0.7em]
The following bound on $|(\xi(t,j),\zeta_i(t,j))|_{\A_i\times\C_i}$ follows with \eqref{eqn:Vibound}:
\\[0.7em]$\null\quad
|(\xi(t,j),\zeta_i(t,j))|_{\A_i\times \C_i} \leq$\refstepcounter{equation}\label{eqn:KLLIOSSboundiOnX}\hfill$(\theequation)$\\[0.5em]\null\hfill$
 \alpha_{i,1}^{-1}\left( z_i(t,j) + \beta_i(|(\xi(0,0),\zeta_i(0,0))|_{\A_i\times\C_i} + |z_i(0,0)|,t,j)
\right)
\quad$\\[0.7em]
for all $(t,j)\in\dom (\xi,\zeta_i)$.

The following example illustrates the construction of a norm observer
for a nonlinear system.
This observer will be used in the design of a hybrid supervisor in Section~\ref{ex:1}.

\begin{example}
\label{ex:LimitedInformation1}
Consider the nonlinear system 
\begin{equation}
\begin{array}{rcl}
\label{eqn:plantEx1}
\dot{\xi} = f_p(\xi,u_p)
:=
\left[
\begin{array}{lll}
-\xi_1+(u_{1}- \xi_2)\xi^2_1\\
-\xi_2+\xi^2 _1+ \overline{\alpha} + u_{2}
\end{array}
\right],
\end{array}\end{equation}
where
$\xi\in\reals^2$ is the state and
$u_{p} = [u_1\ u_2]^{\top}$ is the control input.
An 
output-feedback 
controller has been designed for this system in
\cite{AndrieuPraly07}.\footnote{For the case $\overline{\alpha}=0$, 
dynamic output feedback laws for outputs
given by $\xi_1$ or $\xi_2$
that globally asymptotically stabilize the origin in $\reals^2$ have been proposed in \cite{AndrieuPraly07}.}
Measurements of $\xi_1$ and $\xi_2$ are available but not simultaneously.
Consider a  controller $\KK_0$ given by a static feedback controller
that measures $h_0(\xi) := \xi_1$ to stabilize $\xi$ to $\A_0 = \{(0,0)\}$.
Following \eqref{eq:controller},
an example of such a controller is defined by $n_c=0$, $\kappa_{c,0}(\xi):=[0,\ -\overline{\alpha}]^{\top}$,
and no dynamical state (i.e., $C_{c,0} = D_{c,0} = \emptyset$ and $f_{c,0}, g_{c,0}$ are
arbitrary).
For $V_0(\xi) = \frac{1}{2} \xi^\top \xi$, it follows that, for all $\xi \in \reals^2,$\footnote{Using Young's inequality 
to obtain 
$\xi_1^3 \xi_2 \leq \xi_1^6 + \frac{1}{4} \xi_2^2$ 
and
$\xi_1^2 \xi_2\leq \xi_1^4 + \frac{1}{4} \xi_2^2$.}
\begin{eqnarray}
\nonumber\langle \nabla V_0(\xi),f_p(\xi,\kappa_{c,0}(\xi)) \rangle&=&-\xi_1^2  -\xi_1^3\xi_2 - \xi_2 ^2 +\xi_1 ^2\xi_2 
\\
&\leq&  - V_0(\xi)+\xi_1^4 (1 + \xi_1^2).
\label{eqn:V0dot}
\end{eqnarray}
Then, a norm observer for $|\xi|_{\A_0}$ is given by
$
\dot{z}_0 = -z_0 + \gamma_0(|h_0(\xi)|)
$
with
$\gamma_0(s) = s^4(1 +s^2)$ for all $s\geq 0$.
This norm estimator and the controller $\KK_0$ above are such that
Assumption~\ref{assumption:1}.1 holds.
\hfill $\triangle$
\end{example}

In the next section, we provide a solution to Problem ($\star$)
that consists of a hybrid 
supervisor
 coordinating, using control logic and norm observers, the two (well-posed) output-feedback hybrid controllers $\KK_0$ and $\KK_1$.

\subsection{Proposed Control Strategy}
\label{sec:ControlDesign}

As depicted in Figure~\ref{fig:ControlArchitecture},
we propose a hybrid controller $\KK_s$ to supervise $\KK_0$ and $\KK_1$.
This hybrid controller, referred to as the {\em hybrid supervisor}, is
designed to perform the uniting task as follows:
\begin{enumerate}
\item[{\bf A)}] Apply the hybrid controller $\KK_1$ when the estimate of $|\xi|_{\A_1}$ is away from the origin.
\item[{\bf B)}] Permit estimate of $|\xi|_{\A_1}$ to converge.
\item[{\bf C)}] Apply $\KK_0$ when the estimate of $|\xi|_{\A_1}$ is  close enough to zero.
\end{enumerate}
To accomplish these tasks, the supervisor has a discrete state
$q \in \Q := \{0,1\}$
and a timer state $\tau \in \reals$ with 
reset threshold
$\tau^* > 0$.   
The constant $\tau^*$ is a design parameter of the hybrid supervisor.
The dynamics of the state $q$ are designed to indicate that the controller $\KK_q$ is connected to the plant.
While the accomplishment of tasks {\bf A)-C)} with the proposed hybrid supervisor requires finitely many jumps in the state $q$,
the number of jumps in $q$ depends on the initial conditions as well as on the dynamics of the closed-loop system.
A hybrid supervisor implementing tasks {\bf A)-C)} is presented next.

\subsubsection{Supervision of Controller $\KK_1$ ($q=1$)}
\label{sec:GlobalController}

Item 2.a
of
Assumption~\ref{assumption:1} implies that for every solution $(\xi,\zeta_1)$
to $(\P,\KK_1)$ we have
$$
\lim_{t+j \to \infty}\gamma_1(|h_1(\xi(t,j))|) = 0.
$$
Using \eqref{eqn:NormEstimator} for $i=1$,
it follows that $z_1$ also approaches zero, and that, eventually, when $t$ or $j$ are large enough, 
 $|\xi|_{\A_1}$ is small enough.
This suggests that the supervisor should apply $\KK_1$ until, eventually, $z_1$ is small enough.
This can be implemented as follows:
\begin{itemize}
\item Flow according to
\begin{equation}
\begin{array}{c}
\dot{\xi}  =   f_p(\xi,\kappa_{c,1}(h_1(\xi),\zeta_1)), \;
\dot{\zeta}_0  =  0, \;
\dot{\zeta}_1  =  f_{c,1}(h_1(\xi),\zeta_1),\\
\dot{z}_0  = 0,\;
\dot{z}_1  =  -\eps_1 z_1 + \gamma_1(|h_1(\xi)|),\;
\dot{q}  =  0,\;
\dot{\tau}  =  1
\end{array}\label{eqn:FlowsAt1a}
\end{equation}
when, for a design parameter $\eps_{1,a}>0$,
either one of the following conditions hold:
\begin{equation}
\label{eqn:FlowSet1a}
(\xi,\zeta_1) \in C_1
, \ \ 
\zeta_0 \in \C_0,\ \ 
z_0 = 0,\ \  
z_1 \geq \eps_{1,a},\ \
q = 1,
\end{equation}
or
\begin{equation}
\label{eqn:FlowSet1b}
(\xi,\zeta_1) \in C_1, \ \ 
\zeta_0 \in \C_0, \ \ 
z_0 = 0,\ \ 
z_1 \geq 0, \ \ 
q = 1, \ \
\tau \leq \tau^*.
\end{equation}
\item Jump according to
\begin{equation}\label{eqn:UpdateLaw1To0a}
\begin{array}{c}
\xi^+ = \xi,\ \
\zeta_0^+ \in \Phi_0, \ \ \zeta_1^+ \in \Phi_1, \\
z_0^+ = 0, \ \
z_1^+ = 0, \ \
 q^+ = 0,
\ \ \tau^+ = 0
\end{array}\end{equation}
when
\begin{equation}
\label{eqn:JumpSet1To0a}
\zeta_0 \in \C_0, \ \
z_0 = 0, \ \  \eps_{1,a}\geq z_1 \geq 0,  \ \  q = 1, \ \
\tau \geq \tau^*.
\end{equation}
\end{itemize}

The flows defined in \eqref{eqn:FlowsAt1a} enforce, in particular, that $q$ remains constant
and that the estimate of $|\xi|_{\A_1}$ converges.
Condition \eqref{eqn:FlowSet1a} allows flows when the estimate of $|\xi|_{\A_1}$ is not small enough,
while, when condition \eqref{eqn:JumpSet1To0a} holds,
the state $q$ is set to $0$ so that $\KK_0$ is applied.
The state $\zeta_0$ is updated 
to any value in $\C_0$
and the estimator state $z_0$ is reset to zero.
These selections are to properly initialize $\KK_0$.
 However, to guarantee
that the state $\zeta_1$ converges to $\Phi_1$, 
the state is reset to any point in $\Phi_1$.

Due to the impossibility of measuring $\xi$, it is not possible
to ensure that $\xi$ is such that $(\xi,\zeta_0)$ is in the basin of attraction ${\cal B}_0$
after jumps from $q=1$ to $q=0$ occur.
Hence, 
it could be the case that there are jumps from $q=0$ back to $q=1$.
The logic in
\eqref{eqn:FlowsAt1a}-\eqref{eqn:JumpSet1To0a}
uses the timer $\tau$ to guarantee
convergence of the state to ${\cal B}_0$.
The condition 
$\tau  \leq \tau^*$
 in
\eqref{eqn:FlowSet1b} 
allows
the estimate $|\xi|_{\A_1}$  to converge
by enforcing
 that, perhaps after a few jumps to $q=0$ and back to $q=1$,
$\xi$ eventually is so that $(\xi,\zeta_0)$ is in the said basin of attraction.
The conditions involving $z_0$ 
in \eqref{eqn:FlowSet1a}, \eqref{eqn:FlowSet1b}, and
\eqref{eqn:JumpSet1To0a} force $z_0$ to remain 
at zero
along solutions
with $q = 1$.  
These choices facilitate the establishment
of our main result in Section~\ref{sec:MainResult}.
A procedure to design the controller parameters
is given in Section~\ref{sec:DesignProcedure}.

\subsubsection{Supervision of Controller $\KK_0$ $(q=0)$}
\label{sec:LocalController}

From item 1 of Assumption~\ref{assumption:1} and \eqref{eqn:NormEstimator} for $i=0$,
it follows 
that 
$z_0(t,j)$ approaches zero as $\gamma_0(|h_0(\xi(t,j))|)$ approaches zero.
Furthermore,
when $z_0 \leq \eps_{0,a}$, $\zeta_0 \in \C_0$,
and $t$  
or 
$j$ are large enough, it follows from
\eqref{eqn:VdecreaseEstimator} for $i=0$
and 
items 1.b and 3 in Assumption~\ref{assumption:1}
 that after jumps to $q=0$,
$(\xi,\zeta_0)$ will be in the set
\begin{equation}\label{eqn:V0set}
\defset{(\xi,\zeta_0)}{V_0(\xi,\zeta_0) \leq \eps_{0,b}},
\end{equation}
which,
by definition of $\eps_{0,b}$,
is a subset of the basin of attraction of $({\cal P},\KK_0)$.
Then, the supervisor is designed to apply $\KK_0$
as long as $z_0$ is smaller or equal than $\eps_{0,a}$, 
and when is larger or equal to that parameter, a 
jump to $q=1$ is triggered. 
Note that the logic for $q=1$ eventually forces flows for at least $\tau^*$ units of time,
which allows 
$t$  
or 
 $j$ to become large enough, and with that, 
guarantee that $(\xi,\zeta_0)$ is eventually in the set \eqref{eqn:V0set}.
This mechanism is implemented as follows:
\begin{itemize}
\item Flow according to
\begin{equation}
\begin{array}{rcl}
\label{eqn:FlowsAt0a}
\begin{array}{lll}
\dot{\xi}  =  f_p(\xi,\kappa_{c,0}(h_0(\xi),\zeta_0)),  
\dot{\zeta}_0  =  f_{c,0}(h_0(\xi),\zeta_0),\\
\dot{\zeta}_1 = 0,\ \
\dot{z}_0  =  -\eps_0 z_0 + \gamma_0(|h_0(\xi)|), \\
\dot{z}_1  =  0, \ \
\dot{q}  =  0, \ \
\dot{\tau}  =  0
\end{array}
\end{array}
\end{equation}
\begin{equation}
\hspace{-0.2in}\mbox{when }\left\{
\label{eqn:FlowSet0a}\begin{array}{lll}
(\xi,\zeta_0) \in C_0
, \ \ 
\zeta_1 \in \C_1, \ \ \eps_{0,a} \geq z_0 \geq 0,\\ z_1= 0, \ \ q = 0, \ \ 
\tau = 0.\end{array}
\right.
\end{equation}
\item
Jump according to
\begin{equation}
\begin{array}{rcl}
\label{eqn:UpdateLaw0To1a}\begin{array}{lll}
\xi^+ = \xi, \ \
\zeta_0^+  \in \Phi_0, \ \ \zeta_1^+ \in \Phi_1,\ \ z_0^+ = 0,\\
 z_1^+ = 0,\ \ q^+ = 1, \ \ \tau^+ = 0,
\end{array}\end{array}
\end{equation}
when
\begin{equation}
\label{eqn:JumpSet0To1a}
\zeta_1 \in \C_1, \ \ z_0 \geq \eps_{0,a},\ \ z_1 = 0, \ \ q = 0, \ \ \tau=0.
\end{equation}
\end{itemize}

As \eqref{eqn:FlowsAt1a}, the flows defined in \eqref{eqn:FlowsAt0a} enforce, in particular, 
that $q$ remains constant
and that the estimate of $|\xi|_{\A_0}$ converges.
In fact, condition 
\eqref{eqn:FlowSet0a}
allows flows when the estimate of $|\xi|_{\A_0}$
is small enough, permitting it to converge.
When condition \eqref{eqn:JumpSet0To1a} holds, 
a jump back to $q=1$
occurs. As explained below \eqref{eqn:JumpSet1To0a},
in particular, such a jump would occur when, after a jump from $q=1$
to $q = 0$, the state $(\xi,\zeta_0)$ is not in ${\cal B}_0$.
The state $\zeta_1$ is updated to any value
in $\Phi_1$ and the estimator
state $z_1$ is reset to zero.
These selections properly initialize $\KK_1$
and enable our main result in Section~\ref{sec:MainResult}.

\subsubsection{Closed-loop system}
\label{sec:Closed-loopSystem}

We are now ready to write the resulting closed loop as a hybrid system.
The closed-loop hybrid system
has state
$
\chi =
(
{\xi},
{\zeta}_0,
{\zeta}_1,
{z}_0,
{z}_1,
{q},
{\tau})
\in \reals^{n_p} \times \reals^{n_c}\times \reals^{n_c} \times \realsgeq \times \realsgeq \times \Q \times \realsgeq=:X.
$
Collecting the definitions in
Sections~\ref{sec:GlobalController} and \ref{sec:LocalController},
the resulting closed-loop system, which is denoted by
$\HS_{cl}$,
has dynamics given as follows:
$$\begin{array}{rcl}
\dot{\chi}
&=&
\left[
\renewcommand{\arraystretch}{1.5}
\begin{array}{c}
f_p(\xi,\kappa_{c,q}(h_q(\xi),\zeta_q))\\
(1-q)f_{c,0}(h_0(\xi),\zeta_0)\\
q\,f_{c,1}(h_1(\xi),\zeta_1)\\
(1-q)(-\eps_0 z_0 + \gamma_0(|h_0(\xi)|))\\
q(-\eps_1 z_1 + \gamma_1(|h_1(\xi)|))\\
0\\
q
\end{array}
\right]
=: F(\chi), \
\chi \in \widetilde{C},
\end{array}$$
$$\begin{array}{rcl}
\chi^+
& \in &
G_0(\chi)
\cup
G_1(\chi)
\cup
G_s(\chi)
=: G(\chi), \
\chi \in \widetilde{D},
\end{array}$$
where: for each $q = 0$, $(\xi,\zeta_0) \in D_{0}$
$$\begin{array}{rcl}
G_{0}(\chi)  =  \left[
\renewcommand{\arraystretch}{1.5}
\begin{array}{c}
\xi\\
g_{c,0}(h_0(\xi),\zeta_0)\\
\zeta_1\\
(1-\eps_0) z_0 + \gamma_0(|h_0(\xi)|)\\
z_1\\
q\\
\tau
\end{array}
\right],
\end{array}$$
$G_0(\chi)=\emptyset$ otherwise;
for each $q = 1$, $(\xi,\zeta_{1}) \in D_{1}$
$$\begin{array}{rcl}
G_{1}(\chi) =
\left[
\renewcommand{\arraystretch}{1.5}
\begin{array}{c}
\xi\\
\zeta_0\\
g_{c,1}(h_1(\xi),\zeta_1)\\
z_0\\
(1-\eps_1) z_1 + \gamma_1(|h_1(\xi)|)\\
q\\
\tau
\end{array}
\right],
\end{array}$$
$G_1(\chi) = \emptyset$ otherwise;
for each
$\chi \in D_{s,a} \cup D_{s,b}$,
$$\begin{array}{rcl}
G_{s}(\chi)  = 
\left(\xi,\Phi_0,
\Phi_1,
0,
0,
1-q,
0
\right),
\end{array}
$$
$G_s(\chi)=\emptyset$ otherwise;
$$\begin{array}{rcl}
\widetilde{C} &\! \! :=\! \! &  \defset{\chi}{(\xi,\zeta_q) \in C_{q}}
\cap
\left(
C_{s,a} \cup C_{s,b} \cup C_{s,c}
\right),
\\
C_{s,a}&\! \! :=\! \!  &
\defset{\chi}{\zeta_1 \in \C_1, \eps_{0,a} \geq z_0 \geq 0, z_1 = 0, q = 0, \tau=0},\\
C_{s,b}&\! \! := \! \! \! &
\defset{\chi}{\zeta_0 \in \C_0, z_0 = 0, z_1 \geq \eps_{1,a}, q = 1}, \\
C_{s,c} & \! \! := \! \! \! &
\defset{\chi}{\zeta_0 \in \C_0, z_0 = 0, z_1 \geq 0, q = 1, \tau \leq \tau^*},\\
\widetilde{D} & \! \! :=\! \!  &
\defset{\chi}{(\xi,\zeta_q) \in D_{q}}
\cup
D_{s,a}
\cup
D_{s,b},\\
D_{s,a} &\! \!  := \! \! &
\defset{\chi}{\zeta_1 \in \C_1, z_0 \geq \eps_{0,a}, z_1 = 0, q = 0, \tau=0},\\
D_{s,b} & \! \! := \! \! &
\defset{\chi}{\zeta_0 \in \C_0, z_0 = 0,  \eps_{1,a} \geq z_1 \geq 0, q = 1,\tau \geq \tau^*
}.
\end{array}$$
The flow map $F$ is defined in terms of the discrete state $q$ to
``select" the appropriate flow dynamics when $\KK_0$ and $\KK_1$ are applied.
The flow set $\widetilde{C}$ allows flow when both $(\xi,\zeta_q)$ is in the flow set $C_{q}$
and the conditions for flow imposed by the hybrid supervisor
are satisfied.
The latter are given in \eqref{eqn:FlowSet0a}, \eqref{eqn:FlowSet1a}, and \eqref{eqn:FlowSet1b},
which are captured in the sets $C_{s,a}$, $C_{s,b}$, and $C_{s,c}$, respectively.
The jump maps $G_0, G_1,$ and $G_s$ above are defined to execute the jumps of the individual hybrid controllers
when their state jumps
due to 
$(h_q(\xi),\zeta_q) \in D_{c,q}$
or when  reset of the appropriate states is required by the supervisor jump sets
$D_{s,a}$ and $D_{s,b}$,
which are given in \eqref{eqn:JumpSet0To1a} and \eqref{eqn:JumpSet1To0a}, respectively.
Note that since $g_{c,q}$ is only defined on $D_{c,q}$,
the set-valued maps $G_0$ and $G_1$ are nonempty
at points $\chi$ with components in $D_{c,q}$.
For each $i = 0,1$, the functions $\gamma_i$
and constants $\eps_i$
are obtained from the OSS properties of $(\P,\KK_i)$ imposed in 
Assumption~\ref{assumption:1}.
Existence of parameters 
$\eps_{1,a}$ and $\tau^*$
guaranteeing a solution to Problem~($\star$)
 is established in the next section.
 A design method for these parameters is given in
 Section~\ref{sec:DesignProcedure}.

\subsection{Nominal Properties of Closed-loop System}
\label{sec:MainResult}

Our main result is as follows. 

\begin{theorem}{\bf (semi-global asymptotic stability)}
\label{thm:NominalAS}
Suppose Assumption~\ref{assumption:1}  holds.
Then, 
for each compact set 
$\K\subset  X$ of initial conditions
there exists an output-feedback hybrid supervisor $\KK_s$
 such that
the compact set
$$
\A_s
:= \A_0\times \C_0\times \C_1 \times \{0\} \times \{0\} \times \{0\} \times \{0\}$$
is asymptotically
stable for the closed-loop system
$\HS_{cl}$
 with a basin of attraction containing $\K$;
i.e., 
for each $\epsilon>0$ there exists
$\delta>0$ such that each solution $\chi$ to 
$\HS_{cl}$
 with $|\chi(0,0)|_{ \A_s } \leq \delta$ satisfies $|\chi(t,j)|_{\A_s}
 \leq
\epsilon$ for all $(t,j)\in\dom \chi$,
and every maximal solution $\chi$ to $\HS$
with $\chi(0,0)\in \K$ is
complete and satisfies $\lim_{t+j\to\infty} |\chi(t,j)|_{\A_s}=0$.
\end{theorem}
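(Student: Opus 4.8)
The plan is to first verify that $\HS_{cl}$ satisfies the hybrid basic conditions of Definition~\ref{def:ba} --- the supervisor sets $C_{s,a},C_{s,b},C_{s,c},D_{s,a},D_{s,b}$ are closed, $C_q$ and $D_q$ are closed because $\KK_q$ is well posed, $F$ is continuous for each fixed $q\in\{0,1\}$ since $f_p,\kappa_{c,q},f_{c,q},h_q$ and $\gamma_q$ are continuous, and $G=G_0\cup G_1\cup G_s$ is outer semicontinuous and locally bounded because $g_{c,q}$ is and its other entries are continuous or single-valued --- so that $\HS_{cl}$ is well posed. Given the compact set $\K\subset X$, I would then fix the supervisor's design parameters $\eps_{1,a}\in(0,\eps_{1,b}/2]$ and $\tau^*>0$ sufficiently large (a concrete lower bound depending on $\K$ is identified below) and establish separately the two assertions of the theorem: local stability of $\A_s$, and completeness together with convergence to $\A_s$ of every maximal solution starting in $\K$.

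For the convergence assertion the central step is to show that along any maximal solution $\chi$ from $\K$ the discrete state $q$ jumps only finitely many times and is eventually identically $0$. On intervals with $q=1$ the pair $(\xi,\zeta_1)$ solves $(\P,\KK_1)$, which by item~2.a of Assumption~\ref{assumption:1} is complete and steers $(\xi,\zeta_1)$ to $\A_1\times\C_1$; on intervals with $q=0$ the pair $(\xi,\zeta_0)$ solves $(\P,\KK_0)$, and the OSS-Lyapunov inequalities \eqref{eqn:VidotBound}--\eqref{eqn:ViChangeBound} together with the norm-estimator comparison \eqref{eqn:VdecreaseEstimator} keep $V_0$, and hence $(\xi,\zeta_0)$ and $z_0$, bounded. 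Using compactness of $\K$ and well-posedness of $\HS_{cl}$ (sequential compactness of solutions, cf.\ \cite{GoebelTeel06}) one upgrades this to a single compact set containing all maximal solutions from $\K$; in particular $|(\xi,\zeta_1)|_{\A_1\times\C_1}\le M$ along them for some $M>0$. A jump from $q=1$ to $q=0$ is possible only from $D_{s,b}$, which forces $\tau\ge\tau^*$ and $z_1\le\eps_{1,a}$; since $\tau$ is reset to $0$ whenever $q$ is set to $1$ (the map $G_s$), satisfies $\dot\tau=1$ while $q=1$ and is unchanged by $\KK_1$-jumps, every $q=1$ interval entered through a jump lasts at least $\tau^*$ units of ordinary time before it can be left. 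Applying \eqref{eqn:VdecreaseEstimator} with $i=1$ over such an interval, which starts with $z_1=0$ and $|(\xi,\zeta_1)|_{\A_1\times\C_1}\le M$, yields at its end $V_1(\xi,\zeta_1)\le z_1+\beta_1(M,\tau^*,0)\le\eps_{1,a}+2\exp(-\eps_1\tau^*)\alpha_{1,2}(M)$, which is $\le\eps_{1,b}$ once $\tau^*$ is large enough relative to $M$. Hence, immediately after the ensuing jump to $q=0$, the plant state $\xi$ lies in the set appearing in item~3 of Assumption~\ref{assumption:1} with $\zeta_0\in\C_0$, so \eqref{eqn:V1set} holds along the whole subsequent $q=0$ solution; substituting $\gamma_0(|h_0(\xi)|)<\eps_{0,a}\eps_0$ into the $z_0$-dynamics (which start this interval at $z_0=0$) forces $z_0<\eps_{0,a}$ for all later hybrid time, so $D_{s,a}$ is never met and $q$ stays $0$. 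Finally, $\chi$ cannot remain in $q=1$ forever: $\tau$ reaches $\tau^*$ in finite flow time, and by item~2.a together with continuity of $h_1$ and $h_1|_{\A_1}=0$ one has $\gamma_1(|h_1(\xi)|)\to0$, hence $z_1\to0<\eps_{1,a}$, which enables $D_{s,b}$ and, by maximality, forces a jump to $q=0$. Counting the at most one short initial $q=1$ interval (possible only when $\chi$ starts with $q=1$ and $\tau$ already beyond $\tau^*$), the at most one short $q=0$ interval that may follow it, and the final ``good'' $q=1$ interval, $q$ jumps at most three times.

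From the hybrid time at which $q\equiv0$, the state $\chi$ obeys the $(\P,\KK_0)$ dynamics in $(\xi,\zeta_0)$ while $z_1\equiv0$, $\tau\equiv0$, $q\equiv0$ and $\zeta_1$ is frozen at a point of $\C_1$; along this tail $z_0<\eps_{0,a}$ (by item~3 when the tail follows a ``good'' $q=1$ interval, and because $D_{s,a}$ is never met otherwise), so the comparison \eqref{eqn:VdecreaseEstimator} with $i=0$, whose decaying remainder tends to $0$, forces $V_0<\eps_{0,b}$ after finite hybrid time, i.e.\ $(\xi,\zeta_0)$ enters and stays in the sublevel set \eqref{eqn:V0set}, which is the basin estimate ${\cal B}_0$ of $(\P,\KK_0)$; being a subset of the basin of attraction, from it $(\xi,\zeta_0)\to\A_0\times\C_0$ by item~1.b, whence $\gamma_0(|h_0(\xi)|)\to0$ and, from its stable linear dynamics, $z_0\to0$. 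Assembling the limits of all components yields $\chi(t,j)\to\A_s$, and completeness follows from the completeness statements of items~1.b and 2.a, the boundedness above, and the finiteness of the $q$-jumps. For local stability, I would use that a sufficiently small neighborhood of $\A_s$ is contained in $\{q=0\}$ with $z_0<\eps_{0,a}$ and $z_1,\tau$ small; there the supervisor forbids jumps to $q=1$, the coordinates $z_1,\tau,q$ do not move, the stability property in item~1.a keeps $(\xi,\zeta_0)$ near $\A_0\times\C_0$, and this keeps $\gamma_0(|h_0(\xi)|)$ --- hence $z_0$ --- small; a routine bootstrap choice of $\delta$ makes $|\chi(t,j)|_{\A_s}\le\epsilon$ for all $(t,j)\in\dom\chi$, and combined with the convergence already shown this gives asymptotic stability of $\A_s$ with basin of attraction containing $\K$.

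I expect the main obstacle to be the boundedness/finiteness step of the second paragraph: turning the per-solution consequences of the (merely asymptotic, $\mathcal{K}\mathcal{L}$-estimate-free) attractivity of $(\P,\KK_1)$ and of the OSS-Lyapunov inequalities into a \emph{uniform} bound $M$ valid along every solution from the compact set $\K$, while simultaneously ruling out Zeno behavior or unbounded flow within a $q=1$ interval. This is exactly where well-posedness of $\HS_{cl}$ and the sequential-compactness machinery of \cite{GoebelTeel06} must be invoked with care, and it is the technically delicate part; the remainder is bookkeeping with the comparison estimates \eqref{eqn:VdecreaseEstimator}--\eqref{eqn:KLLIOSSboundiOnX} and with the supervisor's flow and jump conditions.
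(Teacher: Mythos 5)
Your proposal follows essentially the same route as the paper's proof: verification of the hybrid basic conditions for $\HS_{cl}$, then the comparison estimate \eqref{eqn:VdecreaseEstimator} combined with the dwell time $\tau^*$ enforced at $q=1$, compactness-based uniform bounds over solutions from $\K$, and item 3 of Assumption~\ref{assumption:1} to conclude that switching stops — which is exactly the content of Lemma~\ref{claim:finite}, proved there by contradiction and by you directly (with an explicit jump count) — followed by convergence and stability from items 1.a–1.b and $h_0(\A_0)=0$. The only substantive details the paper adds are the explicit constants $\Delta,\Delta_1,\Delta_2$ quantifying the uniform bound you call $M$ and the $\alpha_{1,2}\circ\alpha_{1,1}^{-1}$ composition in \eqref{cp1}, needed because $\zeta_1$ is reset into $\Phi_1$ at the switch, so the pre-jump bound on $V_1$ must be converted into a post-jump one; both are absorbed in your choice of $\tau^*,\eps_{1,a}$ ``relative to $M$'' and do not change the argument.
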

\begin{proof}
By the continuity of $f_p$, $h_i$, and $\kappa_{c,i}$ for each $i = 0,1$ imposed by Assumption~\ref{assumption:1}, and continuity of $\gamma_i$, $F$ is continuous.
By the regularity properties of $g_{c,i}$ guaranteed by well posedness of $\KK_i$ and continuity of $h_i$ from Assumption~\ref{assumption:1}, compactness of $\Phi_i$
for each $i=0,1$, and the definition of 
the set-valued map
$G$,
$G:\widetilde{D}\rightrightarrows X$ is outer semicontinuous, locally bounded, and nonempty for all points in $\widetilde{D}$.
By closedness of $C_q$ and $D_q$ guaranteed by well posedness of $\KK_i$ and continuity of $h_i$,
$\widetilde{C}$ and $\widetilde{D}$ are closed sets. 
This establishes
that the hybrid supervisor is such that the closed-loop system is a well-posed
hybrid system.
Moreover, the construction of $\KK_s$ is such that
solutions to the closed-loop system $\HScl$
exist from points in $\reals^{n_p}\times \reals^{n_c} \times \reals^{n_c} \times \realsgeq\times \realsgeq \times \Q\times \realsgeq$.

Now we show that $\A_0$ is attractive from $\K$.
By the attractivity property induced by $\KK_1$ in Assumption~\ref{assumption:1}.2.a and
Assumption~\ref{assumption:1}.3, for every 
maximal
solution $\chi$ to $\HScl$ from $\widetilde{C} \cup \widetilde{D}$
with $q(0,0) = 1$ there exists $(T,J) \in \dom \chi$ such that $\chi(T,J) \in D_{s,b}$.
By definition of $G$, there exists $J'>J$, $(T,J')\in\dom\chi$ such that $\chi^*=\chi(T,J') \in C_{s,a}$.
Let $\chi'$ be the tail of the 
maximal
solution $\chi$ from $(T',J')$ onwards.
With some abuse of notation, 
every maximal solution $\chi'$ to $\HScl$, with $\chi'(0,0) \in C_{s,a}$ (in particular,
with $\chi'(0,0) = \chi^*$) and
$(\xi'(0,0),\zeta'_0(0,0))$ in the set \eqref{eqn:V0set},
is complete and,
by Assumption~\ref{assumption:1}.1.a and b, satisfies
$
\lim_{t+j \to \infty} |\xi'(t,j)|_{\A_0} = 0.
$
If $(\xi'(t,j),\zeta_0'(t,j))$ never reaches \eqref{eqn:V0set},
we claim that there
exists $(t,j)$ such that
$z_0'(t,j) > \eps_{0,a}$, and then, by the definition of $D$ and $G$, $q$ is mapped to $1$.
Suppose not.  Then, the solution 
$\chi'$
remains in $C_{s,a}$ for all 
$(t',j') \in \dom \chi'$ such that $t'+j' \geq t+j$,
which implies that $z'_0(t',j') \leq \eps_{0,a}$ since 
the norm estimator \eqref{eqn:NormEstimator} for $i=0$ remains on.
Then, 
since $\eps_{0,a} < \eps_{0,b}$,
from \eqref{eqn:VdecreaseEstimator} for $i=0$,
there exists large enough 
$t+j$, $(t,j) \in \dom \chi'$, 
such that $V_0(\xi'(t,j),\zeta'_0(t,j)) \leq \eps_{0,b}$.
This is a contradiction.
Then, a jump to $q=1$ occurs.
By the construction of $C_{s,c}$
and $D_{s,b}$, the closed-loop system will remain 
at $q=1$
for at least $\tau^*$ units of time.
Repeating this argument if needed, the fact that the norm estimator \eqref{eqn:NormEstimator} for $i=1$
guarantees that the estimates converge
(even when reset to zero)
 implies that, eventually, a jump to $q=0$
will occur with $(\xi,\zeta_0)$ in the  set \eqref{eqn:V0set}.
Note that this is the case due to the fact that there are  finitely
many jumps from $q=0$ to $q=1$ and back, as the following result guarantees.

\begin{lemma}\label{claim:finite}
There exist positive parameters $\tau^*$, $\eps_{1,a}$, and $\eps_{1,b}$ 
such that there is no nondecreasing sequence of times\footnote{In the sense that
$t'_{n+1}+j'_{n+1}\geq t'_{n}+j'_{n}$ for all $n \in \nats$.}
$\{(t'_n,j'_n)\}_{n \in \nats} \in \dom \chi$ for which, for all $n \in \nats$, 
\begin{equation}\label{eq:claim:finite}
q(t'_{2n},j'_{2n}) = 0, \qquad q(t'_{2n+1},j'_{2n+1}) = 1.
\end{equation}
\end{lemma}

\begin{proof}
By contradiction, suppose that 
there exist
a complete solution $\chi$ and 
a nondecreasing sequence $\{(t'_n,j'_n)\}_{n \in \nats} \in \dom \chi$,
which can always be chosen so that (\ref{eq:claim:finite}) holds,  
$j'_0 > 0$,
$z_{0}(t'_0,j'_0) = 0$,
flows with $q=0$ occur with $(t,j) \in \dom \chi$, $t\in [t'_{2n},t'_{2n+1})$,
and flows with $q=1$ occur with $(t,j) \in \dom \chi$, $t\in [t'_{2n+1},t'_{2n+2}) $,
$n \in \nats$.
Due to the dynamics of the timer $\tau$, which enforces that 
the time between two jumps from $q=1$ to $q=0$
of the supervisor have $t$'s separated
by at least $\tau^* >0$ seconds, 
the solution cannot be Zeno (see Section~\ref{sec:HybridSystems} for a definition of Zeno solutions). 
It follows that, for each $n \in \nats$,
\begin{equation}\label{eqn:BoundOnZiAtjumps}
\begin{array}{l}
(t'_{n},j'_{n}-1) \in \dom \chi,\  (t'_{n},j'_{n}) \in \dom \chi, 
\\
z_0(t'_{2n+1},j'_{2n +1}-1) \geq \eps_{0,a}, \ 
z_1(t'_{2n+2},j'_{2n+2} -1) \leq \eps_{1,a}.
\end{array}
\end{equation}
By considering the restriction on $\{(t,j) \in \dom \chi'\ : \ \; t \in [t'_{2n+1},t'_{2n+2}], q(t,j) = 1\}$ of the solution $\chi$ as a solution to $(\P,\KK_1)$ issuing from $\chi(t'_{2n+1},j'_{2n+1})$, we get, from \eqref{eqn:VdecreaseEstimator}
with $i = 1$ and \eqref{eqn:BoundOnZiAtjumps}, that
$
V_1(\xi(t'_{2n+2},j'_{2n+2}-1),\zeta_1(t'_{2n+2}, j'_{2n+2} -1)) 
\leq 
z_1(t'_{2n+2},j'_{2n+2}-1)
\nonumber
 + \beta_1(|(\xi(t'_{2n+1},j'_{2n+1}),\zeta_1(t'_{2n+1},j'_{2n+1}))|_{\A_1\times\C_1} $ \\$+ |z_1(t'_{2n+1},j'_{2n+1})|,\delta'_{2n+2})
\leq 
\eps_{1,a}
$ \\$+ \beta_1(|(\xi(t'_{2n+1},j'_{2n+1}),\zeta_1(t'_{2n+1},j'_{2n+1}))|_{\A_1\times\C_1},\delta'_{2n+2}),
$
where
$\delta'_s = (t'_{s} - t'_{s-1}, j'_{s} - j'_{s-1}-1)$
and we have used the fact that $z_1(t'_{2n+1},j'_{2n+1})=0$.
Due to the expression of $G_s$, we have 
$\xi(t'_{2n+2},j'_{2n+2}) = \xi(t'_{2n+2},j'_{2n+2}-1)$
and 
$\zeta_1(t'_{2n+2},j'_{2n+2}) \in \Phi_1$.
Then, 
since $j'_{2n+2}>0$ and $|(\xi,\zeta_1)(t'_{2n+2},j'_{2n+2})|_{\A_1\times \Phi_1}
= |\xi(t'_{2n+2},j'_{2n+2})|_{\A_1}$,
using \eqref{eqn:Vibound}
we obtain
\quad$\\[1em]
\alpha_{1,1}(|(\xi,\zeta_1)(t'_{2n+2},j'_{2n+2})|_{\A_1\times \Phi_1}) \leq$\hfill\null \\[0.7em]\null\hfill$
V_1(\xi(t'_{2n+2},j'_{2n+2} -1),\zeta_1(t'_{2n+2}, j'_{2n+2} -1)).
\quad$\\[1em]
Using \eqref{eqn:BoundOnZiAtjumps}
we get
$\\[1em]
|(\xi,\zeta_1)(t'_{2n+2},j'_{2n+2})|_{\A_1\times \C_1 } \leq$\refstepcounter{equation}\label{eqn:StateBoundZ1}\hfill$(\theequation)$\\[0.5em]\null\hfill$ \alpha_{1,1}^{-1}\left( 
\eps_{1,a} + \beta_1(|(\xi,\zeta_1)(t'_{2n+1},j'_{2n+1})|_{\A_1\times\C_1},\delta'_{2n+2})
\right).\quad$\\[0.7em]
In the same way, 
from \eqref{eqn:VdecreaseEstimator}
with $i = 0$ 
we have, for each $n$,
$
V_0(\xi(t'_{2n+1}, j'_{2n+1} ),\zeta_0(t'_{2n+1},{j'_{2n+1}})) 
\leq 
z_0(t'_{2n+1},j'_{2n+1}-1)
+ \beta_0(|(\xi,\zeta_0)(t'_{2n},j'_{2n})|_{\A_0\times\C_0} + |z_0(t'_{2n},j'_{2n})|,\delta'_{2n+1}).
$
This implies
$\\[1em]
|(\xi,\zeta_0)(t'_{2n+1},j'_{2n+1})|_{\A_0\times \C_0} \leq\alpha_{0,1}^{-1}\left(
z_0(t'_{2n+1},j'_{2n+1}-1)\right.\null\hfill$\\[0.5em]\null\hfill$ \left.
 + \beta_0(|(\xi,\zeta_0)(t'_{2n},j'_{2n})|_{\A_0\times\C_0},\delta'_{2n+1})\right),
\quad$\refstepcounter{equation}\label{eqn:StateBoundZ0}\hfill$(\theequation)$\\[0.7em]
where we have used the fact that $z_0(t'_{2n},j'_{2n})=0$.
Since $q(t'_0,j'_0) = 0$ by construction of the sequence $(t'_n,j'_n)$,
we have, from 
\eqref{eqn:StateBoundZ0},
$
|(\xi,\zeta_0)(t'_{1},j'_{1})|_{\A_0\times \C_0} 
\leq
\alpha_{0,1}^{-1}\left(
z_0(t'_{1},j'_{1}-1) 
+ \beta_0(|(\xi,\zeta_0)(t'_{0},j'_{0})|_{\A_0\times\C_0},\delta'_{1})
\right)
$ and, from
\eqref{eqn:StateBoundZ1},
$
|(\xi,\zeta_1)(t'_{2},j'_{2})|_{\A_1\times \C_1}
\leq
 \alpha_{1,1}^{-1}\left( 
\eps_{1,a} +\right.$ \\$\left. \beta_1(|(\xi,\zeta_1)(t'_1,j'_1)|_{\A_1\times\C_1},\delta'_{2})\right),
$
which implies\\[0.7em]$\null\quad
|(\xi,\zeta_1)(t'_{2},j'_{2})|_{\A_1\times \C_1} \leq$\refstepcounter{equation}\label{ineq:Delta}\hfill$(\theequation)$\\[0.5em]\null\hfill$
 \alpha_{1,1}^{-1}
\left( 
	\eps_{1,a} + \beta_1
 	\left( 
		\Delta+\alpha_{0,1}^{-1}
		\left(
	z_0(t'_{1},j'_{1}-1) \right.\right.\right.$\\[0.5em]\null\hfill$\left.\left.\left.+ \beta_0(|(\xi,\zeta_0)(t'_{0},j'_{0})|_{\A_0\times\C_0},\delta'_{1})\right)
,\delta'_{2}\right)\right),\quad$\\[0.7em]
where 
$\Delta=\max_{x\in\mathcal{A}_0\times \C_0,\; y\in \mathcal{A}_1\times \C_1} |x-y |,$ 
which denotes the maximal distance between the sets $\mathcal{A}_0\times \C_0$ and $\mathcal{A}_1\times \C_1$,
which is finite since both sets are compact.
Consider the compact set $\K$ in the assumption of Theorem \ref{thm:NominalAS}. Due to \eqref{eqn:KLLIOSSboundiOnX}, there exists a compact set containing all solutions of $(\mathcal{P},\KK_0)$ starting from $\K$. By this compactness property, the values  
$
\Delta_1: =\max
\; 
z_0(t,j)$  and $\Delta_2 =\max \; 
|(\xi,\zeta_0)(t,j)|_{\A_0\times\C_0}$
are finite, where the maximum are taken on $\{(t,j)\in \dom \chi :\,  \chi \mbox{{\small{ is a solution of }}}(\mathcal{P},\KK_0)\mbox{{\small{ starting from }}}  \K \}$. Therefore, either there does not exist a sequence
 $\{(t'_n,j'_n)\}_{n \in \nats} \in \dom \chi$ satisfying (\ref{eq:claim:finite}), or there exists such a sequence and inequality (\ref{ineq:Delta}) holds. However, in this latter case, using 
 $\max\{t'_1+j'_1,t'_2+j'_2\} < \tau^*$, 
pick $\tau^*>0$ 
and  $\eps_{1,a}$, $\eps_{1,b}$
 such that
\begin{eqnarray}\non
\alpha_{1,2}(\alpha_{1,1}^{-1}
\left( 
	\eps_{1,a} + \overline{\beta}_1
 	\left( 
		\Delta+\alpha_{0,1}^{-1}
		\left(
			\Delta_1
						+ \overline{\beta}_0(\Delta_2,\tau^*)\right)
,\tau^*\right)\right) & & \\ 
& & \hspace{-0.7in} \leq \eps_{1,b}\label{cp1}
\end{eqnarray}
where 
$\overline{\beta}_i(s,t) := 2\exp(-\eps_i t) \alpha_{i,2}(s)$, $i = 0,1$.
Then, using (\ref{eqn:Vibound}) with $i=1$, we get 
$V_1(\xi(t'_{2},j'_{2}),\zeta_1(t'_{2},j'_{2}))\leq \eps_{1,b}$. With (\ref{eqn:V1set}), 
since $\zeta_1(t'_2,j'_2) \in \Phi_1$ and $\zeta_0(t'_2,j'_2) \in \Phi_0$,
we get $\gamma_0(h_0(\xi(t'_{2},j'_{2}))) <\eps_{0,a}\eps_0$. 
Since the supervisor uses $\KK_0$ at $(t'_2,j'_2)$
and Assumption~\ref{assumption:1}.3 implies that, along solutions,
$\dot{z}_0 < -\eps_0 z_0 + \eps_{0,a}\eps_0$, 
we have that $z_0(t,j) < \eps_{0,a}$ for all future $(t,j)$.
Then, no future jump of the supervisor is possible, which is
a contradiction, showing that there is no sequence satisfying \eqref{eq:claim:finite}.
\end{proof}

By the attractivity properties of the basin of attraction 
of
and completeness of solutions to 
$(\P,\KK_0)$, it follows
that every 
maximal
solution converges to $\A_s$.
Hence, solutions are bounded.
By the construction of the jump map in equation \eqref{eqn:UpdateLaw1To0a},
the state $\zeta_1$ converges to $\Phi_1$ while
$z_1$ and $\tau_0$ converge to zero.
To conclude the proof, note that
the local stability properties induced by $\KK_0$ 
and the property $h_0(\A_0) = 0$ imply  
that $\A_s$ is stable.
\end{proof}

\begin{remark}
Note that 
when assuming the existence of a norm-observer for $\mathcal{P}$ (and not a pair of norm-observers for $\mathcal{P}$ in closed loop with $\mathcal{K}_0$ and with $\mathcal{P}$ in closed loop with $\mathcal{K}_1$ 
as in Assumption~\ref{assumption:1}), we  
obtain
 a globally asymptotic stabilizing hybrid controller $\KK_s$. Indeed, following the proof of Theorem \ref{thm:NominalAS} with this additional assumption, we may strength the result of Lemma \ref{claim:finite} and obtain that there is no nondecreasing sequence of times satisfying (\ref{eq:claim:finite}) {\it for any initial condition} (globally).
With such a detectability assumption, the obtained result would be close in spirit to \cite{PrieurTeel:ieee:11}, but generalizes 
it
since \cite{PrieurTeel:ieee:11} pertains to
the problem of uniting
continuous-time controllers 
with same objectives.
\end{remark}


\subsection{A Design Procedure}
\label{sec:DesignProcedure}

Theorem~\ref{thm:NominalAS}
guarantees the existence of an output-feedback hybrid supervisor
solving Problem ($\star$).
While this result does not explicitly provide 
values of the supervisor parameters,
the steps in its proof provide guidelines
(potentially conservative)
on how to choose these parameters.  
When exponential-decay OSS-Lyapunov functions
and associated functions
certifying the OSS properties in Assumption~\ref{assumption:1} 
are available (see \eqref{eqn:Vibound}-\eqref{eqn:ViChangeBound}),
the design procedure in the following result
is a consequence of the arguments in the proof
of Theorem~\ref{thm:NominalAS}.
\begin{corollary}{\bf (design procedure)}
\label{coro:design}
Suppose Assumption~\ref{assumption:1} holds.
The output-feedback hybrid supervisor $\KK_s$
with parameters $\eps_{0,a}$, $\eps_{1,b}$, and $\tau^*$
designed following the next steps
solves Problem ($\star$).
\begin{list}{}{}
\item[Step 1)]
Let $\eps_{0,b}>0$ such that 
$\Gamma_0\!:=\! \defset{(\xi,\zeta_0)\!\!}{\!\!V_0(\xi,\zeta_0) \leq \eps_{0,b}}$
 is a subset
of the basin of attraction ${\cal B}_0$
for the asymptotic stabilization of 
$\A_0\times\C_0$ 
with $\KK_0$.
\item[Step 2)]
Choose $\eps_{0,a}>0$ and $\eps_{1,b}>0$ 
so that $\eps_{0,a} < \eps_{0,b}$,
$
\Gamma_1:= \defset{\xi \in \reals^{n_p}}{V_1(\xi,\zeta_1) \leq \eps_{1,b},\ \zeta_1 \in \C_1} \times
\C_0
$
is a subset of $\Gamma_0$,
and every solution 
$(\xi,\zeta_0)$ to $(\P,\KK_0)$
from $\Gamma_1$
satisfies
$\gamma_0(|h_0(\xi(t,j))|) < \eps_{0,a}\eps_0$ 
for all $(t,j)\in\dom (\xi,\zeta_0)$.
\item[Step 3)] Design $\eps_{1,a}>0$ and $\tau^*>0$ 
such that
\begin{eqnarray*}\non
\hspace{-0.3in}\alpha_{1,2}(\alpha_{1,1}^{-1}
\left( 
	\eps_{1,a} + \overline{\beta}_1\!\left( 
		\Delta+\alpha_{0,1}^{-1}\left(
			\Delta_1 + \overline{\beta}_0(\Delta_2,\tau^*)\right),\tau^*\right)\right) & & \\ 
& & \hspace{-0.2in} \leq \eps_{1,b}.
\end{eqnarray*}
where
$\Delta=\max_{x\in\mathcal{A}_0\times \C_0,\; y\in \mathcal{A}_1\times \C_1} |x-y |$,
$\Delta_1 =\max
\; 
z_0(t,j)$, $\Delta_2 =\max \; 
|(\xi,\zeta_0)(t,j)|_{\A_0\times\C_0}$ 
for each  solution $(\xi,\zeta_0)$ to $(\P,\KK_0)$
from $\M$ (projected onto $\reals^{n_p} \times \reals^{n_c}$),
and
$\overline{\beta}_i(s,t) = 2\exp(-\eps_i t) \alpha_{i,2}(s)$
for each $i = 0,1$.
\end{list}
\end{corollary}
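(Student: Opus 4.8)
The plan is to recognize Corollary~\ref{coro:design} as a constructive restatement of Theorem~\ref{thm:NominalAS}, so that the proof amounts to verifying that the three steps are executable and that, once carried out, the supervisor $\KK_s$ lands in exactly the situation analyzed in the proof of Theorem~\ref{thm:NominalAS}. First I would observe that Steps~1 and~2 merely instantiate the quantities $\eps_{0,b}$, $\eps_{0,a}$, $\eps_{1,b}$ whose existence is asserted by items~1.c and~3 of Assumption~\ref{assumption:1}, now relative to the exponential-decay OSS-Lyapunov functions $V_0$, $V_1$ and the associated $\gamma_0$, $\eps_0$ that are taken to be available: Step~1 fixes $\Gamma_0$ inside ${\cal B}_0$, which is precisely the sublevel set~\eqref{eqn:V0set}, and Step~2 fixes $\eps_{0,a}<\eps_{0,b}$ and $\eps_{1,b}$ so that $\Gamma_1\subset\Gamma_0$ and the output bound $\gamma_0(|h_0(\xi(t,j))|)<\eps_{0,a}\eps_0$ holds along solutions of $(\P,\KK_0)$ from $\Gamma_1$, i.e.\ exactly condition~\eqref{eqn:V1set}. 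Thus after Steps~1--2 all the structural hypotheses exploited in the proof of Theorem~\ref{thm:NominalAS} are in force.

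Next I would treat Step~3, the only genuinely new ingredient: it is literally the sufficient condition~\eqref{cp1} appearing in the proof of Lemma~\ref{claim:finite}, with the three constants evaluated over the prescribed compact set of initial conditions. I would first check that $\Delta$, $\Delta_1$, $\Delta_2$ are well defined and finite: $\Delta$ is the maximal distance between the compact sets $\A_0\times\C_0$ and $\A_1\times\C_1$, and $\Delta_1$, $\Delta_2$ are maxima of the norm-estimator state $z_0$ and of $|(\xi,\zeta_0)|_{\A_0\times\C_0}$, respectively, taken along all solutions of $(\P,\KK_0)$ together with the estimator~\eqref{eqn:NormEstimator} for $i=0$ issuing from the projection of the compact set of initial conditions onto $\reals^{n_p}\times\reals^{n_c}$; these solutions are complete and confined to a compact set by the bound~\eqref{eqn:KLLIOSSboundiOnX}, so the maxima are attained. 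Then, since $\overline{\beta}_i(s,t)\to 0$ as $t\to\infty$ for each fixed $s$ and $\alpha_{1,2}\circ\alpha_{1,1}^{-1}$ is continuous and zero at zero, one can first choose $\tau^*>0$ large enough to make the inner $\overline{\beta}_0(\Delta_2,\tau^*)$ term and then the outer $\overline{\beta}_1$ term arbitrarily small, and afterwards $\eps_{1,a}>0$ small enough, so that the left-hand side of the displayed inequality in Step~3 does not exceed the fixed $\eps_{1,b}>0$; this shows the recipe is not vacuous.

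With the parameters $\eps_{0,b}$, $\eps_{0,a}$, $\eps_{1,b}$, $\eps_{1,a}$, $\tau^*$ chosen as above, I would then close the argument by appealing to Theorem~\ref{thm:NominalAS} essentially verbatim: the closed loop $\HS_{cl}$ is a well-posed hybrid system with solutions from every point of $X$; Lemma~\ref{claim:finite} applies because its hypothesis is exactly the inequality verified in Step~3, hence $q$ switches only finitely many times; consequently a jump to $q=0$ eventually occurs with $(\xi,\zeta_0)$ in the set~\eqref{eqn:V0set}$\subset{\cal B}_0$, Assumption~\ref{assumption:1}.1 drives $\xi\to\A_0$ and $\zeta_0\to\C_0$ while $\zeta_1\to\Phi_1$ and $z_0,z_1,\tau\to 0$, and stability of $\A_s$ follows from the local stability induced by $\KK_0$ together with $h_0(\A_0)=\{0\}$; projecting the basin of attraction onto $\reals^{n_p}$ yields asymptotic stability of $\A_0$ with a basin containing the given compact set, i.e.\ a solution to Problem~($\star$). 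The main obstacle I anticipate is not analytical but a matter of bookkeeping: one must carefully match the \emph{existence} of $\eps_{1,a}$ and $\tau^*$ invoked inside the proof of Theorem~\ref{thm:NominalAS} with the \emph{explicit} recipe here, checking that $\Delta_1$, $\Delta_2$ are computed over solutions of $(\P,\KK_0)$ from the prescribed compact set (not over solutions of the full $\HS_{cl}$), that $\Gamma_1\subset\Gamma_0$ indeed realizes item~3 of Assumption~\ref{assumption:1}, and that~\eqref{cp1} is therefore the correct sufficient condition being imposed.
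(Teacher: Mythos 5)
Your proposal is correct and follows essentially the same route as the paper: the paper presents Corollary~\ref{coro:design} precisely as a consequence of the arguments in the proof of Theorem~\ref{thm:NominalAS}, with Steps~1--2 instantiating $\eps_{0,b}$, $\eps_{0,a}$, $\eps_{1,b}$ as in Assumption~\ref{assumption:1} (in particular condition~\eqref{eqn:V1set}) and Step~3 being exactly the sufficient condition~\eqref{cp1} from the proof of Lemma~\ref{claim:finite}, with $\Delta$, $\Delta_1$, $\Delta_2$ finite by the compactness argument via~\eqref{eqn:KLLIOSSboundiOnX}. Your additional feasibility check (large $\tau^*$, then small $\eps_{1,a}$) matches the paper's remark following the corollary.
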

Note that the condition in Step 3
can always be satisfied 
by picking small enough parameter $\eps_{1,a}$,
which defines the threshold for $z_1$ to switch from $q = 1$ to $q=0$,
and large enough parameter $\tau^*$, which forces flows with controller $\KK_1$
until the timer reaches such value.
Such selections have the effect of enlarging the time
the controller $\KK_1$ is in the loop, making it possible that, after a jump from
$q=1$ to $q=0$,
the state of the plant is such that controller $\KK_0$
stabilizes $\A_0\times\C_0$ without further jump back to $q=1$.
Note that the condition in Step 3 is a consequence of the 
proof of Lemma~\ref{claim:finite}, which guarantees that 
there are finitely many jumps from $q = 0$ to $q = 1$ and back
(but does not quantify the number of such jumps).
The design procedure and, in particular, the tuning of $\eps_{1,a}$ and $\tau^*$
are illustrated in Section~\ref{ex:1} when
revisiting Example~\ref{ex:LimitedInformation1}.


\subsection{Robustness of the Closed-loop System}

The following model of the plant with perturbations is considered
\begin{equation}\label{eqn:PPerturbed}
\dot{\xi} = f_p(\xi,u+d_1) + d_2
\end{equation}
with outputs $y_{p,0} = h_0(\xi) + d_3$
and $y_{p,1} = h_1(\xi) + d_4$, where 
$d_1$ corresponds to actuator error, 
$d_2$ captures unmodeled dynamics,
and $d_3,d_4$ represent measurement noise.\footnote{
The exogenous signals $d_i$, $i=1,\ldots,4$, 
are given
on hybrid time domains, and in general, their
value can jump at jump times. 
For exogenous signals $d_i(t)$, that is, given by functions of time,
given a hybrid time domain $S$ it is possible to define, with some abuse of
notation, $d_i(t,j):=d_i(t)$ for each $(t,j) \in S$. Solutions to
hybrid systems with the perturbations above is understood similarly
to the concept of solution defined in Section~\ref{sec:HybridSystems}.}
Then,
denoting by $\widetilde{d}_i$
the signals $d_i$ extended to the state space of $\chi$,
the overall closed-loop system $\HS_{cl}$
results in a perturbed hybrid system, which is denoted by $\widetilde{\HS}_{cl}$, 
with dynamics
$$\begin{array}{rcl}
\begin{array}{clll}
\dot{\chi} &=& F(\chi+\widetilde{d}_1) + \widetilde{d}_2 &\ \ \chi + \widetilde{d}_1 \in \widetilde{C} \\
{\chi}^+ &\in& G(\chi+\widetilde{d}_1) + \widetilde{d}_2 &\ \ \chi +\widetilde{d}_1 \in \widetilde{D}\ .
\end{array}
\end{array}
$$
The following 
result asserts that the 
stability
of the closed-loop system 
is robust to a class of perturbations.
It follows from the asymptotic stability property established
in Theorem~\ref{thm:NominalAS}
and the fact that the construction of the hybrid supervisor
leads to a well-posed closed-loop system.

\begin{theorem}{\bf (stability under perturbations)}
\label{thm:robustness}
Suppose Assumption~\ref{assumption:1} holds.
Then, 
there exists $\beta \in \classKLL$
such that,
for each $\eps >0$ and each compact set
$\K\subset  X$,
there exists $\delta>0$ such that for each measurable
$\widetilde{d}_1,\widetilde{d}_2:\realsgeq \to \delta\ball$
every solution
$\chi$ to $\widetilde{\HS}_{cl}$ 
with $\chi(0,0) \in \K$
satisfies\\
$
\phantom{GA}|\chi(t,j)|_{\A_s}\leq \beta(|\chi(0,0)|_{\A_s}, t,j)+\eps\  \ \ \forall (t,j) \in \dom \chi.
$
\end{theorem}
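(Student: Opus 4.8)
The plan is to derive Theorem~\ref{thm:robustness} as a corollary of Theorem~\ref{thm:NominalAS} together with the well-posedness of $\HScl$, invoking the standard semiglobal practical robustness results for well-posed hybrid systems (e.g., \cite{GoebelTeel06}). First I would observe that the closed-loop data $(\widetilde C,F,\widetilde D,G)$ satisfies the hybrid basic conditions: this was already shown in the proof of Theorem~\ref{thm:NominalAS} ($\widetilde C,\widetilde D$ closed; $F$ continuous; $G$ outer semicontinuous, locally bounded, nonempty on $\widetilde D$). Next I would note that, by Theorem~\ref{thm:NominalAS}, for each compact $\K\subset X$ there is a hybrid supervisor $\KK_s$ rendering the compact set $\A_s$ asymptotically stable with basin of attraction containing $\K$, and since all maximal solutions from $\K$ are complete and converge to $\A_s$, the solutions from $\K$ are uniformly bounded; in particular there is a compact set $\K'\supset\K$ that is forward invariant (up to the asymptotic stability estimate) for the nominal flow.

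The core step is then to apply the robustness theorem for well-posed hybrid systems: pre-asymptotic stability of a compact set for a well-posed hybrid system is automatically \emph{semiglobally practically robust} in the sense that there exists $\beta\in\classKLL$ (a $\classKL$-type estimate with two time arguments, as in \cite{Cai.Teel.ACC.06}) such that for each $\eps>0$ and each compact $\K$ there is $\delta>0$ for which the $\delta$-inflated system (perturbing the state in $\widetilde C$, $\widetilde D$, $F$, $G$ by $\delta\ball$) admits the estimate $|\chi(t,j)|_{\A_s}\le\beta(|\chi(0,0)|_{\A_s},t,j)+\eps$ along all solutions from $\K$. The perturbed system $\widetilde\HScl$ described above, with measurable $\widetilde d_1,\widetilde d_2:\realsgeq\to\delta\ball$, is (pointwise) a selection of this $\delta$-inflation, so its solutions inherit the same bound. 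I would make this precise by writing $\widetilde\HScl$ in the inflated form $\dot\chi\in F(\chi+\delta\ball)+\delta\ball$ on $\{\chi:(\chi+\delta\ball)\cap\widetilde C\neq\emptyset\}$ (and similarly for jumps) and noting every solution of the stated perturbed model is a solution of the inflated one.

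The main obstacle is bookkeeping rather than conceptual: one must verify that the perturbation structure in $\widetilde\HScl$ — perturbing the \emph{argument} of $F$ and $G$ (via $\chi+\widetilde d_1$, including in the membership tests $\chi+\widetilde d_1\in\widetilde C$, $\widetilde D$) and adding $\widetilde d_2$ to the output — is dominated by the standard $\delta$-regularization of the hybrid data, so that the general robustness result (which is stated for that canonical regularization) applies verbatim. Here one uses continuity of $F$ and $h_i$, $\kappa_{c,i}$, $\gamma_i$ to convert a $\delta_0$-bound on $\widetilde d_1,\widetilde d_2$ into a $\delta$-bound on the canonical inflation, shrinking $\delta_0$ as needed; since $\K$ (hence the relevant compact operating region $\K'$) is fixed, the moduli of continuity are uniform and this is harmless. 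The only other point requiring care is that Theorem~\ref{thm:NominalAS} gives \emph{pre}-asymptotic stability with a basin containing $\K$ (not global), which is exactly the hypothesis under which the semiglobal-practical robustness result is stated, so no strengthening is needed. Assembling these observations yields the claimed $\beta\in\classKLL$ and the estimate, completing the proof.
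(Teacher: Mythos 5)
Your proposal is correct and follows essentially the same route as the paper: both arguments rest on well-posedness of $\HScl$ plus the nominal asymptotic stability from Theorem~\ref{thm:NominalAS}, embed the perturbed system $\widetilde{\HS}_{cl}$ into the canonical $\delta$-inflation of the data, and invoke the robustness results of \cite{GoebelTeel06} (Theorems 6.5 and 6.6) to obtain the $\classKLL$ bound with the $+\eps$ offset. The only difference is cosmetic: your appeal to moduli of continuity of $F$, $h_i$, $\kappa_{c,i}$ is unnecessary, since the stated perturbation is already pointwise contained in the $\delta$-inflation with the same $\delta$.
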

\begin{proof}
 By Theorem 6.5 in \cite{GoebelTeel06}, there exists
 $\beta \in \classKLL$ such that all solutions $\chi$ to $\HScl$
  satisfy
$
|\chi(t,j)|_{\A_s}\leq \beta(|\chi(0,0)|_{\A_s},t,j)
$
for all $(t,j)\in \dom \chi$.
Consider the perturbed hybrid system $\widetilde{\HS}_{cl}$.
Since $\widetilde{d}_1(t),\widetilde{d}_2(t) \in \delta\ball$ for all $t\geq 0$,
the closed-loop system
$\widetilde{\HS}_{cl}$  can be written as
\begin{equation}\label{eqn:HSclPerturbed}
\begin{array}{rcl}
\mattarraytwo{
\dot{\chi} &\in F_{\delta}(\chi)\ \ \ \ \chi \in C_{\delta}\\
\chi^+ &\in G_{\delta}(\chi)\ \ \ \ \chi \in D_{\delta},}
\end{array}
\end{equation}
where
$
F_{\delta}(\chi)  :=  \cco F(\chi+\delta\ball) + \delta \ball$,\\ $\non
G_{\delta}(\chi) :=  \defset{\eta}{\eta \in \chi' + \delta\ball, \chi' \in G(\chi+\delta\ball)}$, \\$\non
C_{\delta}   :=  \defset{\chi}{(\chi+\delta\ball) \cap \widetilde{C} \not= \emptyset}$, 
 and \\$\non
D_{\delta}  :=  \defset{\chi}{(\chi+\delta\ball) \cap \widetilde{D} \not= \emptyset}$.
This hybrid system corresponds to
an outer perturbation of $\HScl$ and satisfies (C1), (C2), (C3), and
(C4) in \cite{GoebelTeel06} (see Example 5.3 in \cite{GoebelTeel06} for more
details).
Then, the claim follows by Theorem 6.6 in \cite{GoebelTeel06} since, for each compact
set $\K$ of the state space and each $\eps>0$, there exists $\delta^*>0$ such
that for each $\delta \in (0,\delta^*]$, every solution $\chi_{\delta}$
to 
\eqref{eqn:HSclPerturbed}
 from $\K$ satisfy, for all $(t,j)\in\dom \chi_{\delta}$,
$
|\chi_{\delta}(t,j)|_{\A_s} \leq \beta(|\chi_{\delta}(0,0)|_{\A_s},t,j)+\eps.
$
\end{proof}

\begin{remark}
\label{rmk:Extensions}
The stability and attractivity assumptions imposed in
Theorem~\ref{thm:NominalAS} and Theorem~\ref{thm:robustness}
can be further relaxed as in \cite{PrieurTeel:ieee:11}.
In particular, the attractivity induced
by $\KK_1$ can be relaxed to be semi-global and practical (by adapting the considered compact set $\K\subset  X$ to these ``semi-global and practical'' properties).
Also, it can be relaxed to allow the individual controllers to have solutions that are bounded 
but not complete, as long as the solutions to the closed-loop system are all complete.
Lastly, note that Theorem~\ref{thm:robustness} gives a qualitative robustness result. When focusing on specific nonlinear systems (such as linear systems with saturation at the input), estimations of basins of attraction of individual continuous-time controllers have been used in \cite{PrieurTeel:ieee:11} and thus it may be possible, for this class of specific nonlinear systems, to derive qualitative results and more explicit bounds for the robustness issue. 
\end{remark}

\section{Examples}
\label{sec:examples}

The proposed control algorithm piecing together two output-feedback
hybrid controllers is applicable to numerous control systems where 
the design of a single robust  stabilizing controller 
is difficult or even impossible.  Such applications include the stabilization of the inverted
position of the single pendulum \cite{SanfeliceTeelGoebelPrieur06ACC},
the inverted position of the pendubot \cite{SanfeliceTeel07ACC},
the position and orientation of a mobile robot \cite{Sanfelice.ea.08.CDC.Supervisor},
and the synchronization of Lorenz oscillators \cite{Efimov.ea.IJRNC.11}.
An implementation of the proposed controller in a real-world system
will result in a logic-based algorithm that triggers the 
discrete updates of the variables $z_0$, $z_1$, $q$, and $\tau$
by checking via {\em if/else} statements if the variables and measurements 
are in the jump set $\widetilde{D}$.  
In such situations, the algorithm will update the 
values of the variables at the next time step.
For an example of such an implementation, see \cite{OFlahertySanfeliceTeel07}.

Next, we revisit Examples \ref{ex:Avoidance1} and \ref{ex:LimitedInformation1}.

\subsection{Stabilization with constrained inputs and limited information}
\label{ex:1}

Consider the  stabilization of the origin of \eqref{eqn:plantEx1} in Example~\ref{ex:LimitedInformation1}.
Suppose that the inputs are constrained to $u_1u_2=0$
and that $\overline{\alpha}$ is a constant satisfying $|\overline{\alpha}| \in (0,\frac{\sqrt{3}}{3})$.
Measurements of $\xi_1$ and $\xi_2$ are available but not simultaneously.
Due to these constraints, the task of 
designing a single controller or a controller uniting
two controllers with the same objectives 
for the stabilization of the origin is daunting.
However, 
a hybrid controller $\KK_s$, as presented in this paper, can be designed
to accomplish this task by coordinating two controllers, $\KK_0$ and $\KK_1$,
with different objectives.
Consider the  controller $\KK_0$ 
in Example~\ref{ex:LimitedInformation1}
which consists of a
static feedback controller
that measures $h_0(\xi) := \xi_1$ to stabilize $\xi$ to $\A_0 = (0,0)$.
From \eqref{eqn:V0dot},
it can be verified that $\defset{\xi}{V_0(\xi)\leq \frac{1}{6}} \subset {\cal B}_0$,
with ${\cal B}_0$ being the basin of attraction 
for $\KK_0$.  
Since $|\overline{\alpha}| \in (0,\frac{\sqrt{3}}{3})$, we have that $V_0((0,\overline{\alpha})) < \frac{1}{6}$ and thus the point $(0,\overline{\alpha})$ is in the interior of ${\cal B}_0$.
A  controller $\KK_1$ can be designed to steer the solutions to $\A_1:=(0,\overline{\alpha})$.
From \eqref{eqn:V0dot}, 
it follows that the point $(0,\overline{\alpha})$ 
belongs to the interior of ${\cal B}_0$; 
hence item 3 in Assumption~\ref{assumption:1} holds.
Let $h_1(\xi) := \xi_2-\overline{\alpha}$.
The controller $\KK_1$
is given as in \eqref{eq:controller} with $n_c=0$, $\kappa_{c,1}(\xi):= [h_1(\xi)+\overline{\alpha},\ 0]^{\top}$, 
and no dynamical state (i.e., $C_{c,1} = D_{c,1} = \emptyset$ and $f_{c,1}, g_{c,1}$ are
arbitrary).
With this controller, 
the function $V_1(\xi) = \frac{1}{4}\xi_1^4+\frac{1}{2}(\xi_2-\overline{\alpha})^2$ satisfies,
for all $\xi \in \reals^2$,
$
\langle \nabla V_1(\xi),f_p(\xi,\kappa_{c,1}(\xi)) \rangle
\leq -V_1(\xi),$
from where a norm observer for $|\xi|_{\A_1}$ follows;
e.g., we can use $\dot{z}_1 = - z_1$.
Then, Assumption~\ref{assumption:1} holds
with 
$m_{c,0} = m_{c,1} = 1$,
$\C_0 = \C_1 = \emptyset$,
$\eps_0 = 1$,
and
$\eps_1 = 1$.
Then, using Theorem~\ref{thm:NominalAS}
there exists a hybrid supervisor $\KK_s$ such that
the origin of \eqref{eqn:plantEx1} is asymptotically stable.
Following Section~\ref{sec:Closed-loopSystem},
the closed-loop system
has state
$
\chi =
(
{\xi},
{z}_0,
{z}_1,
{q},
{\tau})
\in \reals^{2} 
\times \reals \times \reals \times \Q \times \reals=:X
$
and is given by\footnote{
We denote the $i$-th component
of $\kappa_{c,q}$ by 
$\kappa_{c,0}^i(\xi)$, $i = 1,2$, $q = 0,1$.
}
$$
 F(\chi) :=
\left[
\renewcommand{\arraystretch}{1.5}
\begin{array}{l}
\left[
\begin{array}{lll}
-\xi_1+(\kappa_{c,q}^1(\xi)- \xi_2)\xi^2_1\\
-\xi_2+\xi^2 _1+ \overline{\alpha} + \kappa_{c,q}^2(\xi)
\end{array}
\right]\\
(1-q)(-z_0 + |h_0(\xi)|^4 (1+|h_0(\xi)|^2))\\
- q\, z_1\\
0\\
q
\end{array}
\right], 
$$
$
G(\chi)  :=
[
\xi^\top\ \
0\ \
0\ \
1-q\ \ 
0
]^\top,
$
$\widetilde{C} :=  
C_{s,a} \cup C_{s,b} \cup C_{s,c}$,
$$\begin{array}{rcl}
C_{s,a}& :=  &
\defset{\chi}{
\eps_{0,a} \geq z_0 \geq 0, z_1 = 0, q = 0, \tau=0},\\
C_{s,b}& :=  &
\defset{\chi}{
z_0 = 0, z_1 \geq \eps_{1,a}, q = 1}, \\
C_{s,c} & :=  &
\defset{\chi}{
z_0 = 0, z_1 \geq 0, q = 1, \tau \leq \tau^*},
\end{array}$$
$\widetilde{D}  := 
D_{s,a}
\cup
D_{s,b}$,
$$\begin{array}{rcl}
D_{s,a} & := &
\defset{\chi}{
z_0 \geq \eps_{0,a}, z_1 = 0, q = 0, \tau=0},\\
D_{s,b} & := &
\defset{\chi}{
z_0 = 0,  \eps_{1,a} \geq z_1 \geq 0, q = 1, \tau \geq \tau^*}.
\end{array}$$
Figure~\ref{fig:1} shows a trajectory to the closed-loop system
when $\overline{\alpha} = \frac{1}{4}$, $\eps_{0,a} = \eps_{1,a} = 0.01$, $\tau^* = 1$,
and $\K_0 = 10\ball$,
which are parameters found numerically.
The trajectory starts from 
$\xi(0,0) = (3,-3)$
with controller $\KK_1$ 
connected to 
the plant ($q = 1$), which steers the plant component to 
a neighborhood of the origin.
At about $(t,j) \approx (4.65,0)$,
$z_1$ reaches $\eps_{1,a}$ and $\tau$ is above $\tau^*$,
triggering a jump to $q =0$. 
In that mode,
the local controller steers the plant component to zero,
$z_0$ approaches zero, and the other controller components
remain at zero.
Figure~\ref{fig:1bis}
shows a trajectory to the closed-loop system
with $q(0,0) = 0$
and $\xi(0,0) = (30,-30)$.
In this case, a jump of the supervisor to $q=1$ occurs initially.\footnote{Dashed (red) lines denote jumps in the state components.}
Since after the jump $z_1$ is mapped to zero,
$z_1$ remains at zero for the remainder of the solution,
jumps back to $q=0$ are triggered every $\tau^*$ seconds,
with instantaneous jumps back to $q=1$ until the local
controller is capable of stabilizing $\A_0$.

The design procedure in Corollary~\ref{coro:design}
can be used to systematically select
parameters $\eps_{1,a}$ and $\tau^*$.
In this way, we follow the steps proposed therein
with $\bar{\alpha} = \frac{1}{4}$ and
$\K_0 = 10\ball$.
Since, as shown earlier, we have $\defset{\xi}{V_0(\xi)\leq \frac{1}{6}} \subset {\cal B}_0$,
then we pick $\eps_{0,b} = \frac{1}{6}$ in Step 1 and define $\Gamma_0$.
When $\frac{4}{27} \leq \eps_{0,a} <  \eps_{0,b}$ and $\eps_{1,b} \leq 0.015$,
we have that the conditions in Step 2 hold.
In fact, solutions $\xi$ from $\Gamma_0$ satisfy $|\xi(t,j)|\leq \frac{\sqrt{3}}{3}$ for all $(t,j) \in \dom \xi$ and, since $\gamma_0(s) = s^4(1+s^2)$, we have $\gamma_0(|h_0(\xi(t,j))|) \leq \frac{4}{27}$.
Moreover, a simple check on level sets indicates that
$\Gamma_1:=\defset{\xi \in \reals^{2}}{V_1(\xi) \leq 0.015} \subset \Gamma_0$.
To pick $\eps_{1,a}$ and $\tau^*$ in Step 3, we first obtain the 
following values after straightforward computations:
$\Delta = |\bar{\alpha}|$, $\Delta_1 = \eps_{0,a}$,
$\Delta_2 = \alpha_{0,1}^{-1}\left(\eps_{0,a}+
2 \alpha_{0,2}(10+\eps_{0,a})\right)$,
$\alpha_{0,1}^{-1}(s) = (2s)^{1/2}$,
$\alpha_{0,2}(s) = \frac{1}{2}s^2$, and
$\alpha_{1,1}^{-1}(s) = 2 \max\left\{s^{1/4},s^{1/2} \right\}$.
Using $\eps_{0,a} = \frac{4}{27}$,
then the condition in Step 3 is satisfied 
with $\eps_{1,a}= 0.00005$ and $\tau^* = 15$.
Figure~\ref{fig:1bisbis} shows a simulation
of the closed-loop system with these parameters,
which indicates that convergence to the origin
occurs after only one jump.

\begin{figure}[h!]  
\begin{center}  
\subfigure[Plant trajectory \label{fig:PlanarPlot-LimitedInfo}]
{
\psfragfig*[width=.47\textwidth]{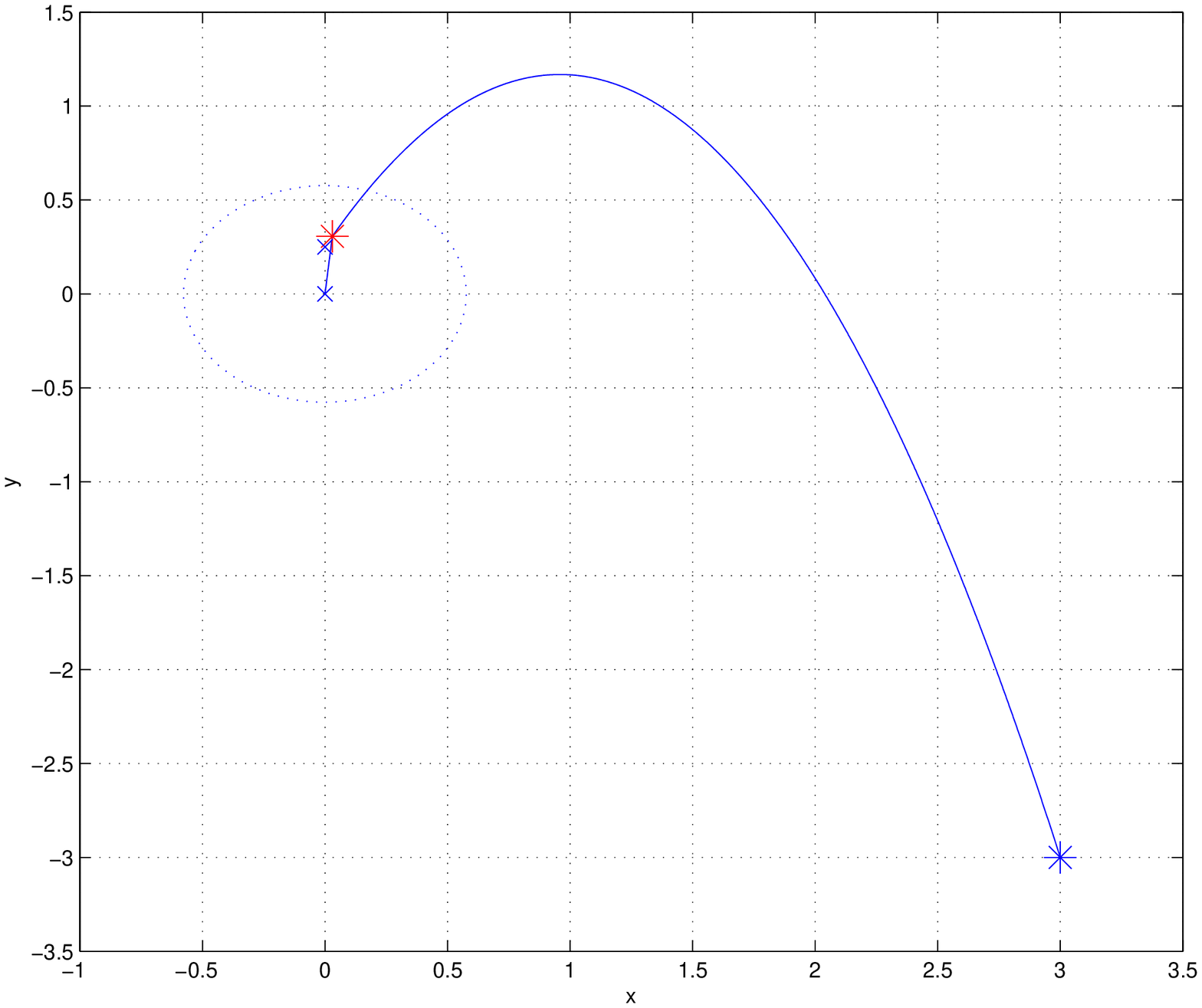}
{
\psfrag{x}[][][0.9]{\hspace{-0.25in} $\xi_1$}
\psfrag{y}[][][0.9]{\qquad$\xi_2$}
}
}
\subfigure[Controller trajectory \label{fig:EstimatorPlot-LimitedInfo}]
{
\psfragfig*[width=.47\textwidth]{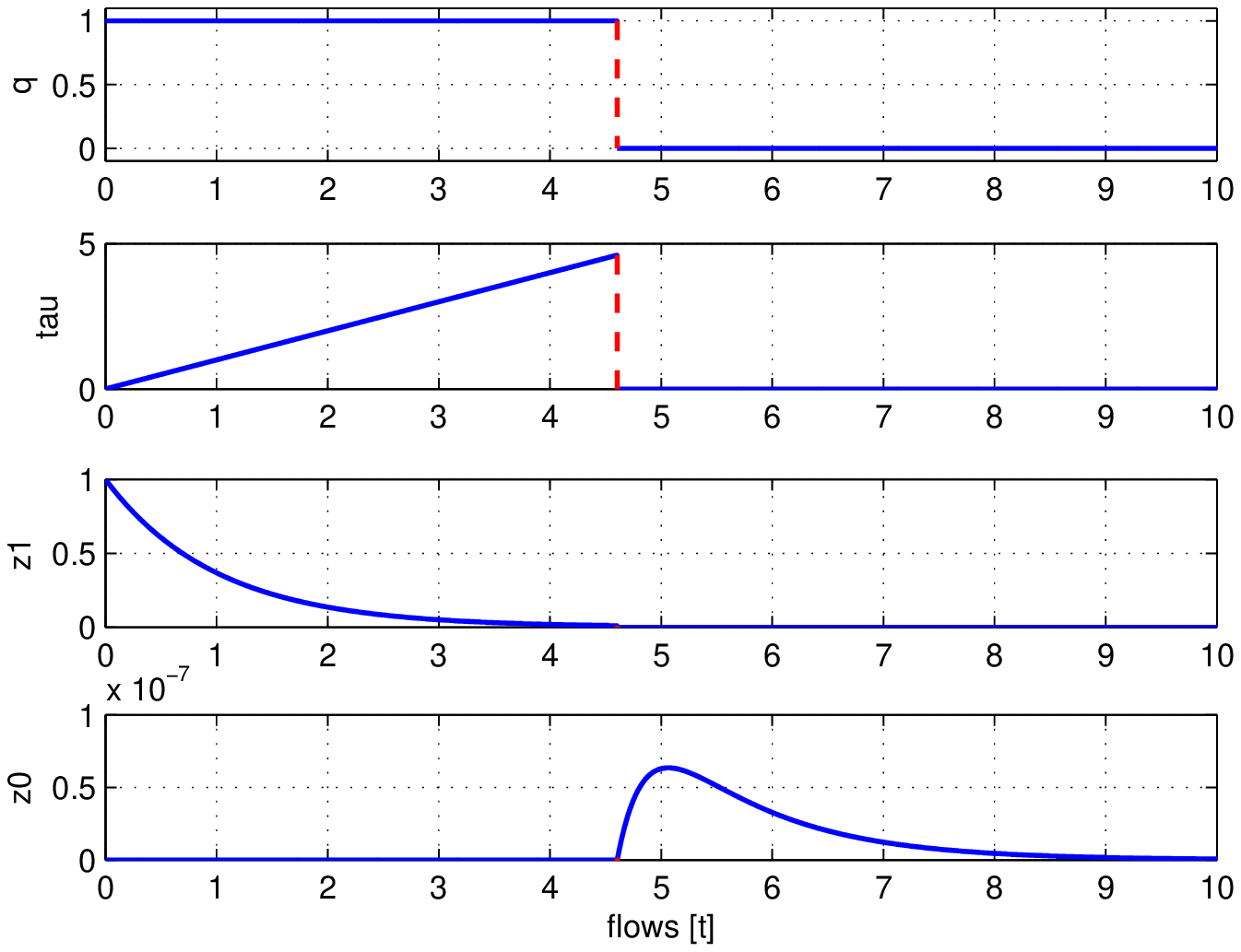}
{
\psfrag{x}[][][0.9]{\hspace{-0.25in} $\xi_1$}
\psfrag{y}[][][0.9]{\qquad$\xi_2$}
\psfrag{q}[][][0.9]{$q$}
\psfrag{tau}[][][0.9]{$\tau$}
\psfrag{z1}[][][0.9]{$z_1$}
\psfrag{z0}[][][0.9]{$z_0$}
\psfrag{flows [t]}[][][0.7]{$t$}
}  
}
\end{center}  
\caption{Plant and controller states of a closed-loop trajectory.  (a) Plant component $\xi(t,j)$ for \eqref{eqn:plantEx1} from $\xi(0,0)=(3,-3)$,
$q(0,0) = 1$, $\tau(0,0) = z_0(0,0)= 0$,
$z_1(0,0) = 1$.  
Dotted lines denote an estimate of ${\cal B}_0$, 
$\star$ (red) the jump from $q=1$ to $0$,
and $\times$ the sets $\A_1 = (0,\overline{\alpha})$ and $\A_0 = (0,0)$,
with $\overline{\alpha} = \frac{1}{4}$.
(b) Controller states of hybrid supervisor $\KK_s$.
The dashed lines represent the jumps in the variables.
Controller parameters: 
$\eps_{0,a} = \eps_{1,a} = 0.01$, and $\tau^* = 1$.}
\label{fig:1}
\end{figure}

\begin{figure}[h!]  
\begin{center}  
\subfigure[Plant trajectory and zoom around origin\label{fig:2PlanarPlot-LimitedInfo}]
{
\psfragfig*[width=.455\textwidth]{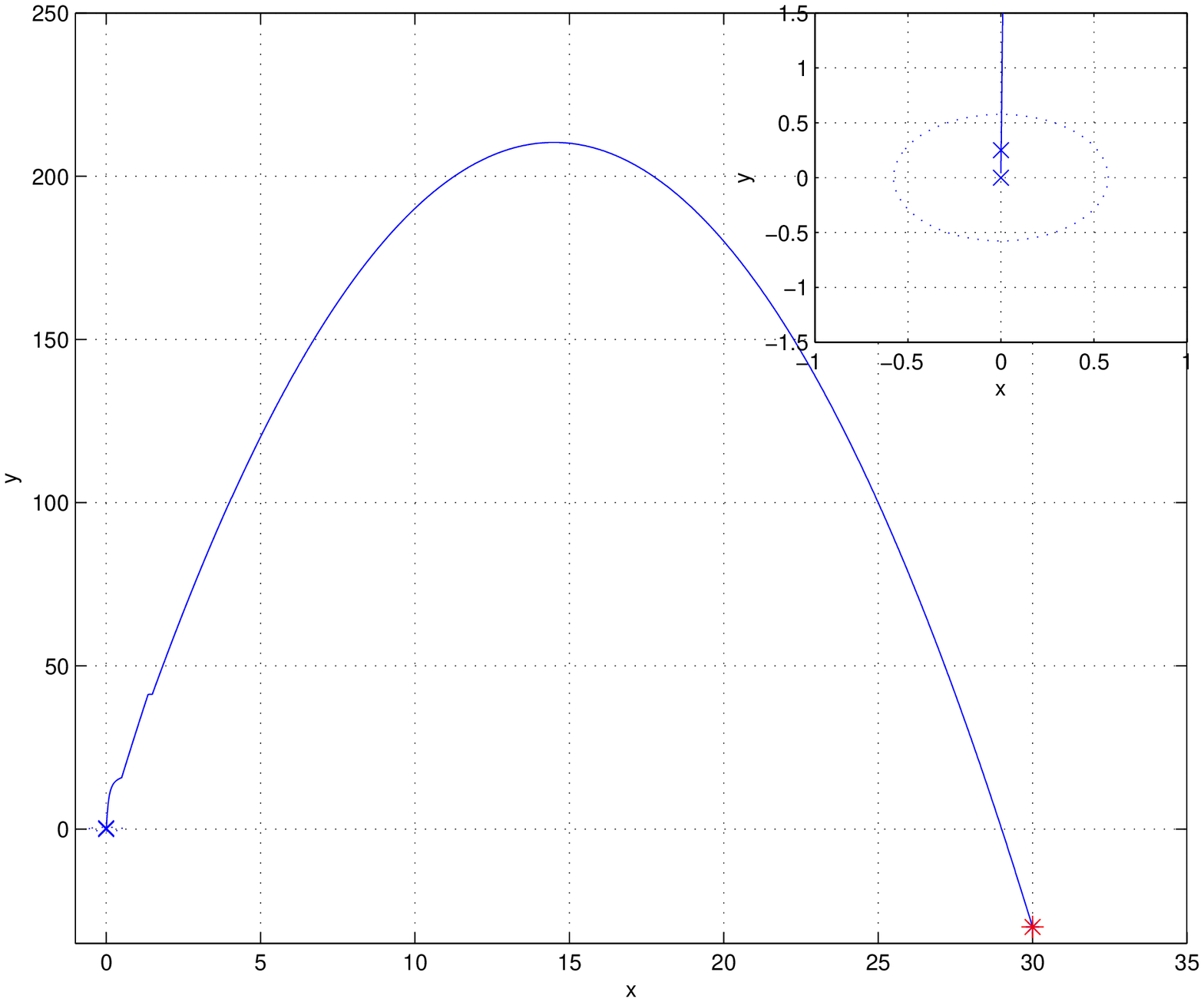}
{
\psfrag{x}[][][0.9]{\hspace{-0.25in} $\xi_1$}
\psfrag{y}[][][0.9]{\qquad$\xi_2$}
\psfrag{q}[][][0.9]{$q$}
\psfrag{tau}[][][0.9]{$\tau$}
\psfrag{z1}[][][0.9]{$z_1$}
\psfrag{z0}[][][0.9]{$z_0$}
\psfrag{flows [t]}[][][0.7]{$t$}
\psfrag{0.005}[][][0.9]{}
}
}  
\subfigure[Controller trajectory \label{fig2:EstimatorPlot-LimitedInfo}]
{
\psfragfig*[width=.45\textwidth]{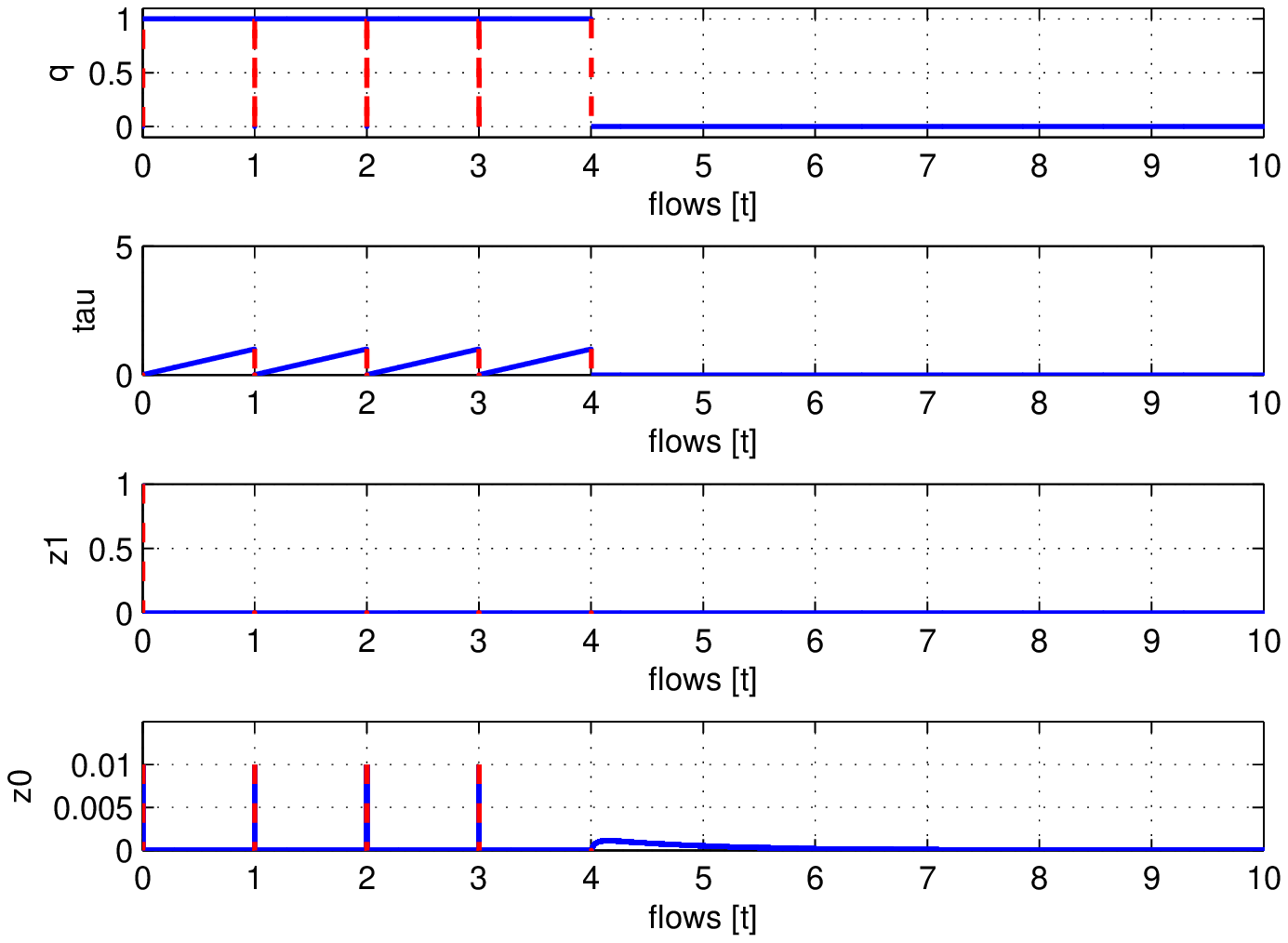}
{
\psfrag{x}[][][0.9]{\hspace{-0.25in} $\xi_1$}
\psfrag{y}[][][0.9]{\qquad$\xi_2$}
\psfrag{q}[][][0.9]{$q$}
\psfrag{tau}[][][0.9]{$\tau$}
\psfrag{z1}[][][0.9]{$z_1$}
\psfrag{z0}[][][0.9]{$z_0$}
\psfrag{flows [t]}[][][0.7]{$t$}
\psfrag{0.005}[][][0.9]{}
}
}  
\end{center}  
\caption{Plant and controller states of a closed-loop trajectory.  (a) Plant component $\xi(t,j)$ for \eqref{eqn:plantEx1} from $\xi(0,0)=(30,-30)$,
$q(0,0) = \tau(0,0) = z_0(0,0)= 0$, 
$z_1(0,0) = 1$.  
Dotted lines denote an estimate of ${\cal B}_0$, 
$\star$ (red) the jump from $q=1$ to $0$,
and $\times$ the sets $\A_1 = (0,\overline{\alpha})$ and $\A_0 = (0,0) (= \A)$,
with $\overline{\alpha} = \frac{1}{4}$.
(b) Controller states of hybrid supervisor $\KK_s$.
The dashed lines represent the jumps in the variables.
Controller parameters: 
$\eps_{0,a} = \eps_{1,a} = 0.01$, and $\tau^* = 1$.
}
\label{fig:1bis}
\end{figure}



\begin{figure}[h!]  
\begin{center}  
\subfigure[Plant trajectory \label{fig:PlanarPlot-LimitedInfo3}]
{
\psfragfig*[width=.43\textwidth]{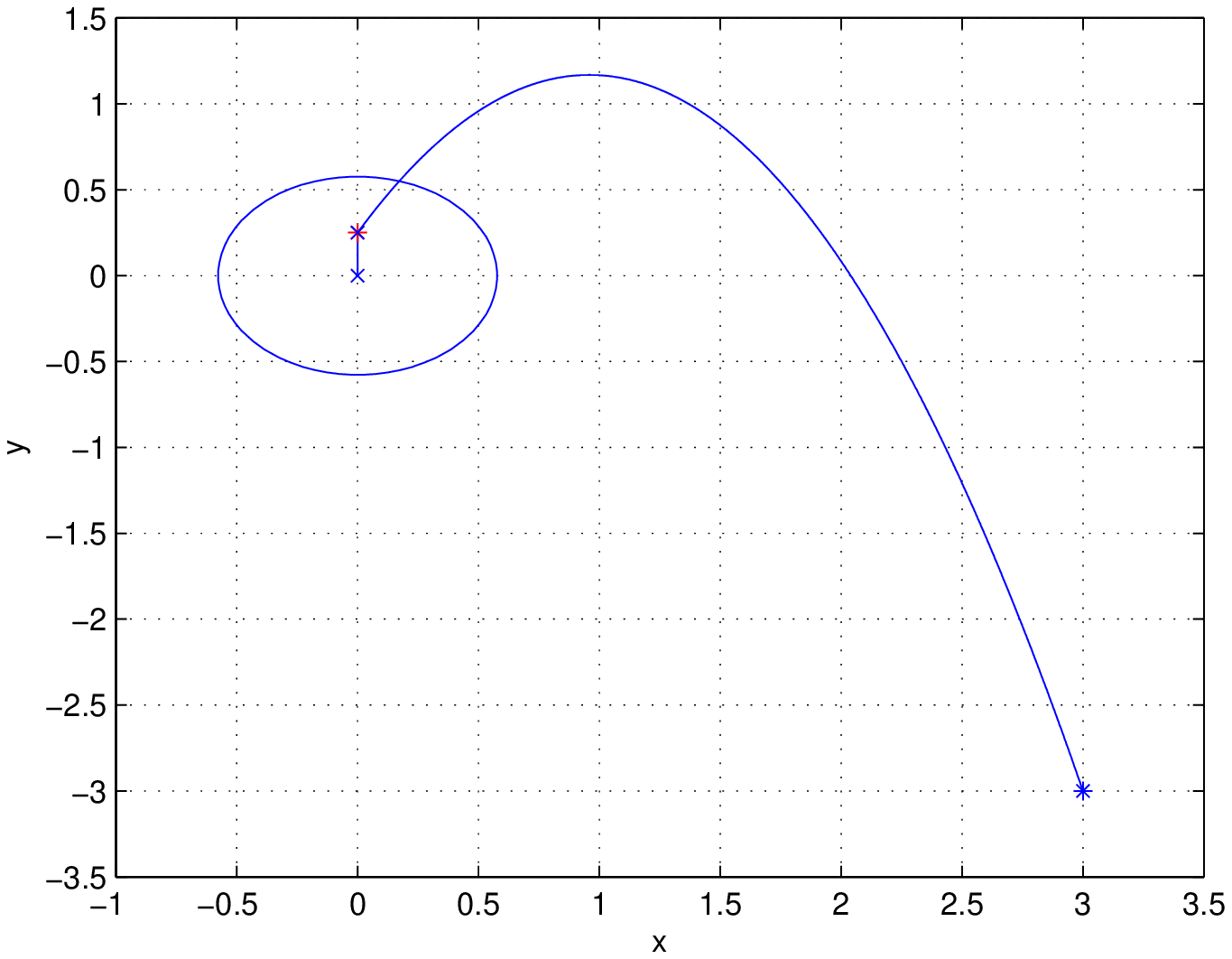}
{
\psfrag{x}[][][0.9]{\hspace{-0.25in} $\xi_1$}
\psfrag{y}[][][0.9]{\qquad$\xi_2$}
\psfrag{q}[][][0.9]{$q$}
\psfrag{tau}[][][0.9]{$\tau$}
\psfrag{z1}[][][0.9]{$z_1$}
\psfrag{z0}[][][0.9]{$z_0$}
\psfrag{flows [t]}[][][0.7]{$t$}
}
}  
\subfigure[Controller trajectory \label{fig:EstimatorPlot-LimitedInfo3}]
{
\psfragfig*[width=.48\textwidth]{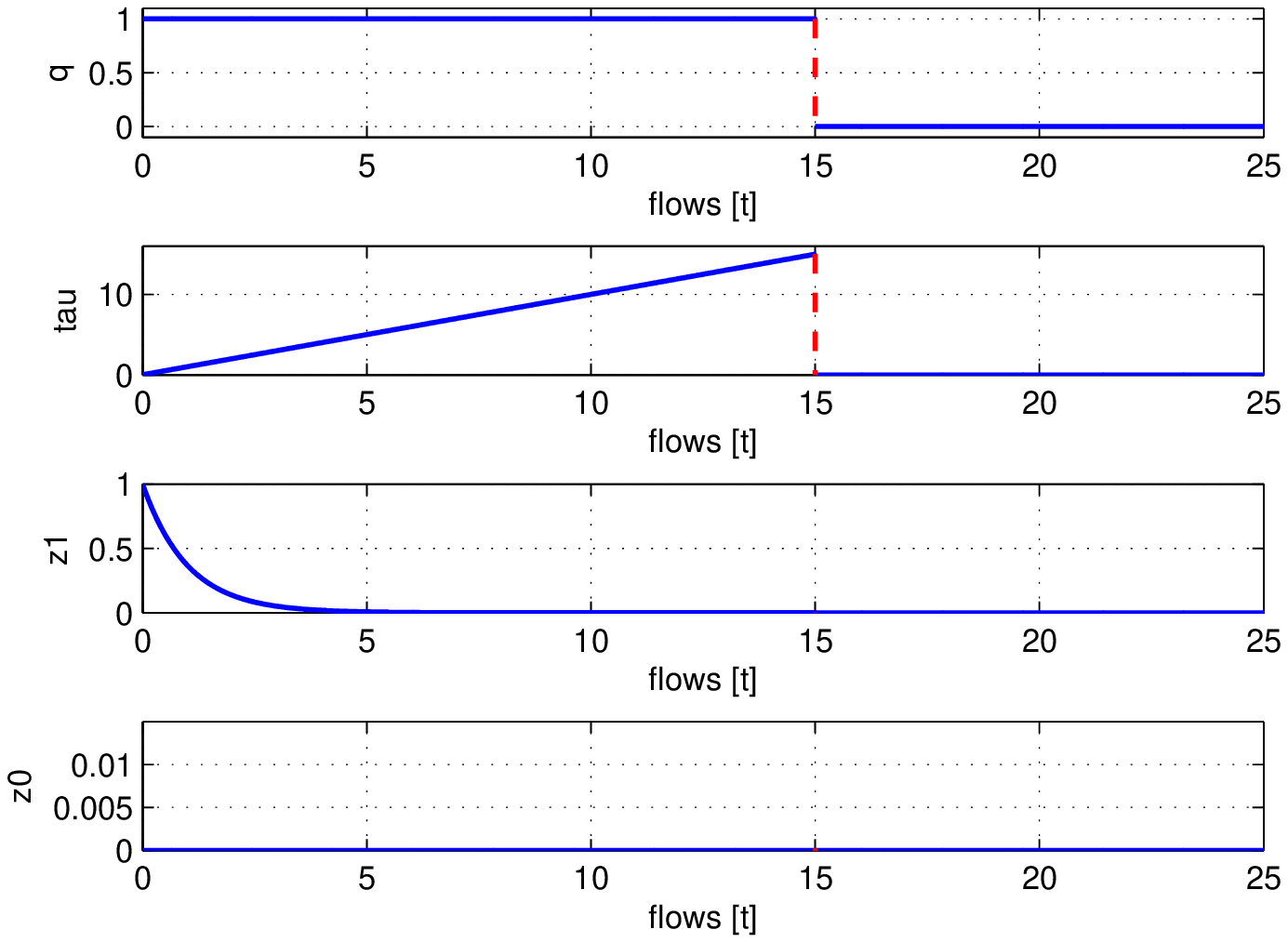}
{
\psfrag{x}[][][0.9]{\hspace{-0.25in} $\xi_1$}
\psfrag{y}[][][0.9]{\qquad$\xi_2$}
\psfrag{q}[][][0.9]{$q$}
\psfrag{tau}[][][0.9]{$\tau$}
\psfrag{z1}[][][0.9]{$z_1$}
\psfrag{z0}[][][0.9]{$z_0$}
\psfrag{flows [t]}[][][0.7]{$t$}
}
}  
\end{center}  
\caption{Plant and controller states of a closed-loop trajectory.  (a) Plant component $\xi(t,j)$ for \eqref{eqn:plantEx1} from $\xi(0,0)=(3,-3)$,
$q(0,0) = 1$, $\tau(0,0) = z_0(0,0)= 0$, 
$z_1(0,0) = 1$.  
Dotted lines denote an estimate of ${\cal B}_0$, 
$\star$ (red) the jump from $q=1$ to $0$,
and $\times$ the sets $\A_1 = (0,\overline{\alpha})$ and $\A_0 = (0,0)$,
with $\overline{\alpha} = \frac{1}{4}$.
(b) Controller states of hybrid supervisor $\KK_s$.
The dashed lines represent the jumps in the variables.
Controller parameters: 
$\eps_{0,a} = \frac{4}{27}$, $\eps_{1,a}= 0.00005$, and $\tau^* = 15$.
}
\label{fig:1bisbis}
\end{figure}


\subsection{Stabilization under topological obstructions}
\label{ex:2}

Consider the stabilization of the point $\A_0 := \{\xi^*\}$,
for the point-mass system 
in Example~\ref{ex:Avoidance1}.
Following the discussions therein, 
 the measurements available are
\begin{equation}
\begin{array}{rcl}
y_1 &=&
h_1(\xi)
:=
\left(
\varphi_{1}(\xi),
\nabla \varphi_{1}(\xi),
\varphi_{2}(\xi),
\nabla \varphi_{2}(\xi)
\right)
\quad 
\forall \xi \in \reals^2,\\
y_2 &=&
h_2(\xi) :=  \xi
\qquad \qquad \quad 
\forall \xi \in \xi^*+\varepsilon\ball
\end{array}\end{equation}
for some $\varepsilon >0$,
where
$\varphi_i$, $i = 1,2$, are continuously differentiable functions given by
$$
\varphi_i(\xi) := \frac{1}{2} (\xi-\xi^{\circ})^{\top}(\xi-\xi^{\circ}) + B(d_i(\xi))
$$
with $B:\realsgeq\to\reals$ a continuously differentiable
function defined as $B(z):=\max\{0,(z-1)^2\ln\frac{1}{z}\}$ and
$d_i:\reals^2\to\realsgeq$ a continuously differentiable function
that measures the distance from any point in $O_i$ to the set ${\cal N}$.
These functions define ``potential'' functions 
relative to the intermediate target point $\xi^\circ$ that include the 
presence of the obstacle.
The sets $\mathcal{N}$ for $\hat{\alpha}= 0.07$ and $\overline{\xi} = (1,0)$, 
$\A_0$ for $\{\xi^*\} = \{(4,-\frac{1}{4})\}$, 
and $O_i$ given by
$
O_1 = \defset{\xi \in \reals^2}{|\xi_1| - 1.1 \geq \xi_2}$,
$O_2 = \defset{\xi \in \reals^2}{|\xi_1| + 1.1 \leq \xi_2}$
are depicted in Figure~\ref{fig:2}.
The point $\xi^{\circ}$ is the point at which $\varphi_i$ vanishes.
The local controller can measure the full state $\xi$ in the neighborhood $\A_0+ \eps\ball$ 
for $\eps = 1$.


\begin{figure}[hb!]  
\begin{center}  
\subfigure[
Plant trajectory with initial conditions $\xi(0,0)=0$,
$q(0,0) = 1$, $\zeta_1(0,0)=1$, 
steered below the obstacle
using $\kappa_1(\xi,1)$ while in $\zeta_1 = 1$.
\label{fig:PlanarPlot-Avoidance1}]
{
\psfragfig*[width=.45\textwidth]{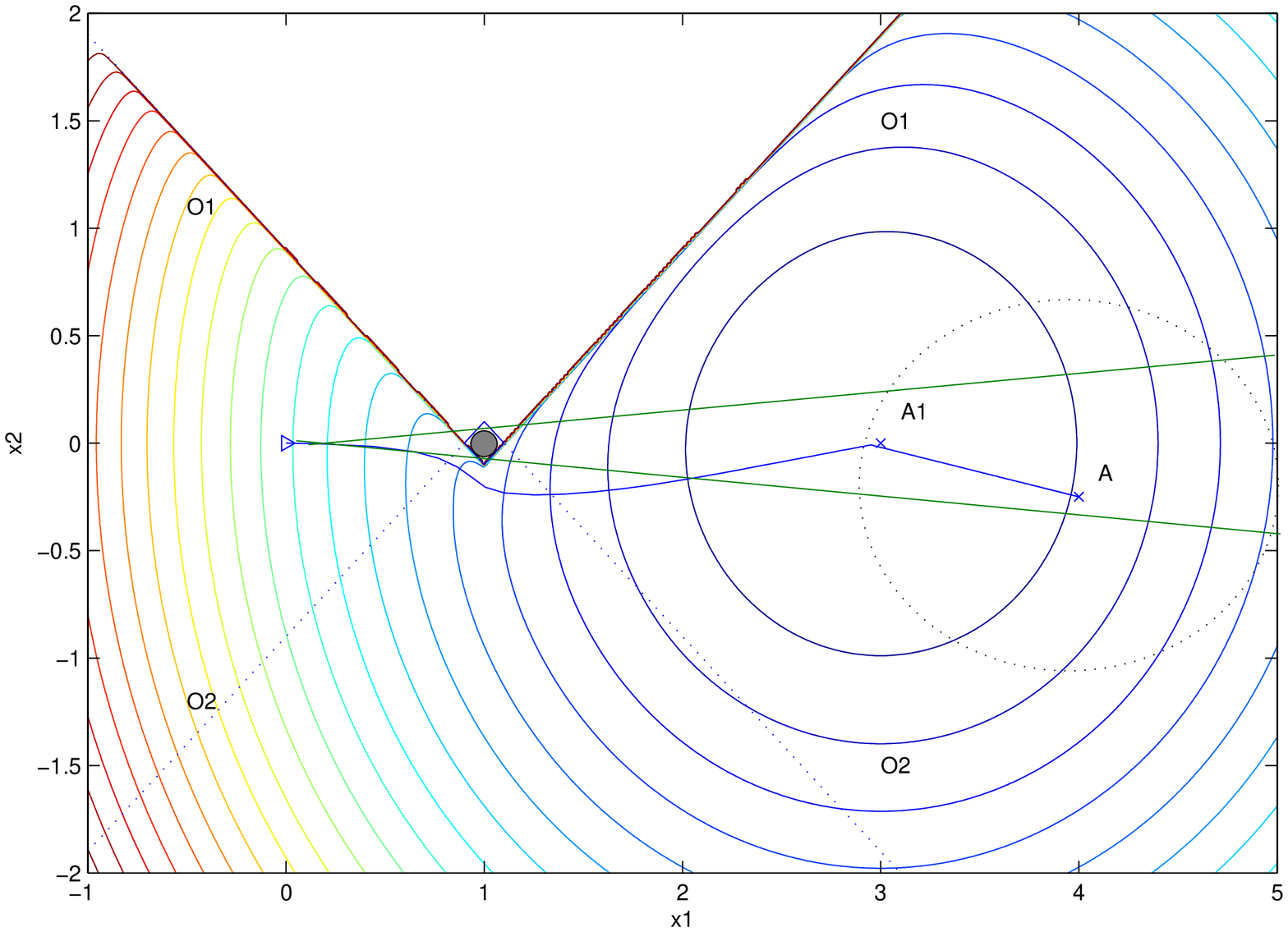}
{
\psfrag{x1}[][][0.9]{\hspace{-0.25in} $\xi_1$}
\psfrag{x2}[][][0.9]{\qquad$\xi_2$}
\psfrag{ A}[][][0.9]{$\A_0$}
\psfrag{ A1}[][][0.9]{$\A_1$}
\psfrag{O1}[][][0.9]{$O_1$}
\psfrag{O2}[][][0.9]{\!\!\!\!\!\!\!\!$O_2$}
}
}
  \quad
\subfigure[
Plant trajectory  with initial conditions $\xi(0,0)=0$, $q(0,0) = 1$, $\zeta_1(0,0)=2$,
steered above the obstacle
using $\kappa_1(\xi,2)$ while in $\zeta_1 = 2$.
\label{fig:PlanarPlot-Avoidance2}]
{
\psfragfig*[width=.45\textwidth]{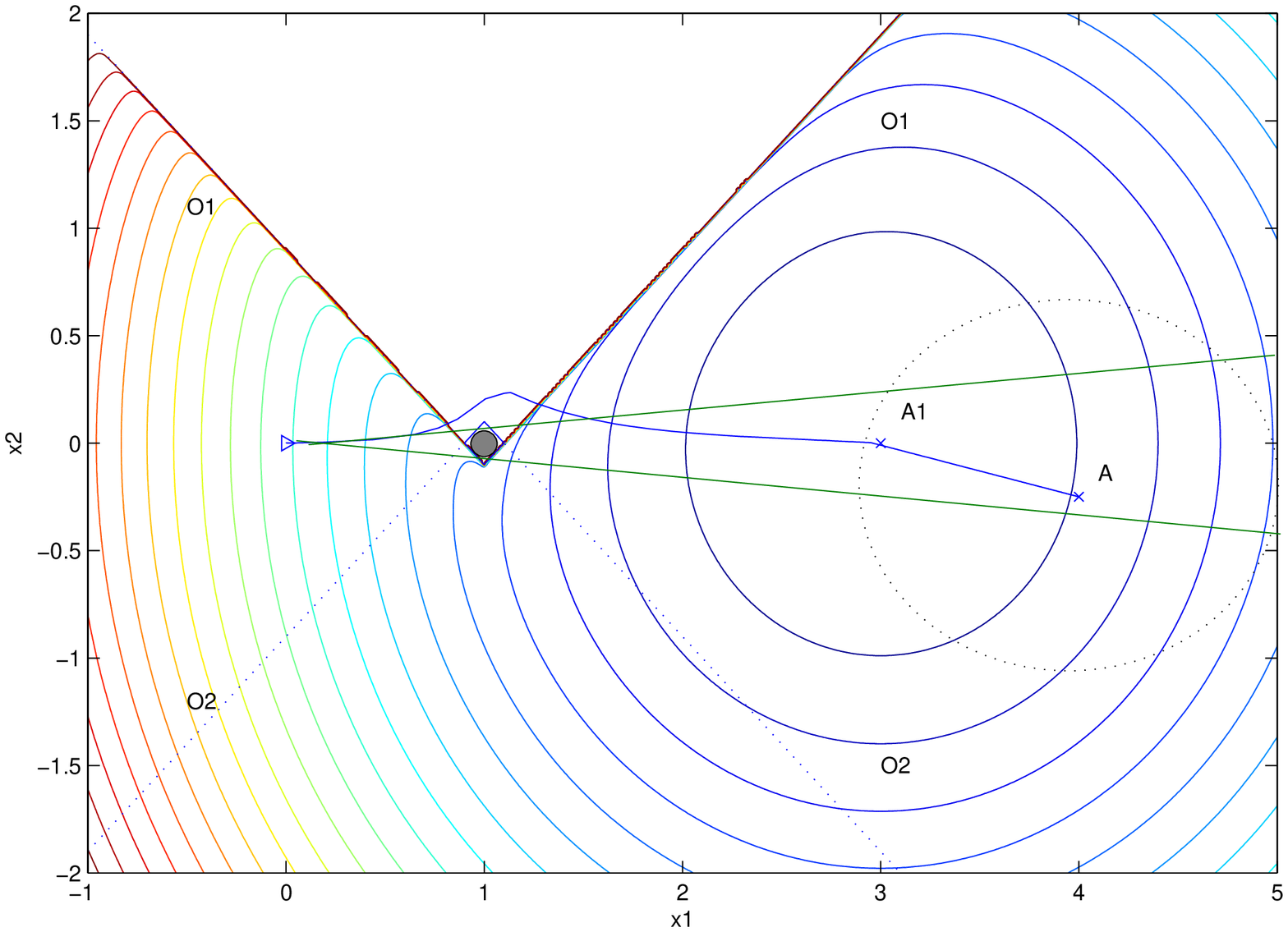}
{
\psfrag{x1}[][][0.9]{\hspace{-0.25in} $\xi_1$}
\psfrag{x2}[][][0.9]{\qquad$\xi_2$}
\psfrag{ A}[][][0.9]{$\A_0$}
\psfrag{ A1}[][][0.9]{$\A_1$}
\psfrag{O1}[][][0.9]{$O_1$}
\psfrag{O2}[][][0.9]{\!\!\!\!\!\!\!\!$O_2$}
}
}  
\end{center}  
\caption{Trajectories $\xi(t,j)$ to point-mass system with hybrid supervisor $\KK_s$.
Dotted circle denotes an estimate of ${\cal B}_0$
and $\times$ the sets $\A_1 = \{(3,0)\}$ and $\A_0 = \{(4,-\frac{1}{4})\}$.
The set $O_1$ is the region below the upper ``wedge,''
while the set $O_2$ is  the region above the lower ``wedge,''
which is depicted in dotted line.
The cone emanating from the initial condition
depicts that, initially, the target point is not in the
line-of-sight of the point-mass system.
The controller parameters used are $\mu = 1.1$ and $\lambda = 0.09$.
}
\label{fig:2}
\end{figure}


We design a hybrid supervisor $\KK_s$ to coordinate
two output-feedback controllers.
The controller while in mode $q=1$ is hybrid  with a discrete state $\zeta_1 \in \{1,2\}$ 
evolving continuously according to
$\dot{\zeta}_1 = 0$.
The target stabilization set for this controller is taken to be $\A_1= \{ \xi^{\circ} \}$.
Let $\mu > 1, 
\lambda \in (0,\mu-1)$.
The following hybrid controller defines the feedback law $\KK_1$
$
\nonumber 
\kappa_{c,1}(\xi,\zeta_1) := -\nabla \varphi_{\zeta_1}(\xi)
$
when $(\xi,\zeta_1)\in  C_{c,1}$,
where
$\\[1em]
C_{c,1}
:=\{
(\xi,\zeta_1) \in \cup_{\zeta_1\in \{1,2\}}
(O_{\zeta_1}\times \{\zeta_1\})\ : \
\null\hfill$\\[0.5em]\null\hfill$
{\varphi_{\zeta_1}(\xi)\leq \mu\min_{\zeta_1\in \{1,2\}} \varphi_{\zeta_1}(\xi)}\}
$\\[0.7em]
and has discrete dynamics given by 
$$
\zeta_1^+ \in G_1(y_1,\zeta_1):=
  \defset{\zeta_1'\in \{1,2\}}{\varphi_{\zeta_1}(\xi) \geq (\mu-\lambda) \varphi_{\zeta_1'}(\xi)}
$$
when $(\xi,\zeta_1)\in D_{c,1}$,
where 
$$
\begin{array}{rcl}
D_{c,1} := \defset{
(y_1,\zeta_1)
}
{ \varphi_{\zeta_1}(\xi)\geq (\mu-\lambda)\min_{\zeta_1'\in \{1,2\}}
  \varphi_{\zeta_1'}(\xi)} .
\end{array}
$$
The design parameters of the controller $\KK_1$
are $\mu$
and $\lambda$.

Take $V(\xi,\zeta_1) = \varphi_{\zeta_1}(\xi)$, then with the $\KK_1$ dynamics we obtain,
with $\gamma':=(\mu- \lambda)^{-1}$, $\gamma' \in (0,1)$, $\rho(s) = s^2$,
\begin{equation}\non
\begin{array}{rcl}
V(\xi,\zeta_1') \leq \gamma' V(\xi,\zeta_1)  \quad \forall \zeta_1' \in G_1(\xi,\zeta_1), \ \forall (\xi,\zeta_1) \in D_{c,1} \ ,
\end{array}
\end{equation}
and, $\forall (\xi,\zeta_1) \in C_{c,1}$,
\begin{equation}\non
\begin{array}{rcl}
 \langle \nabla V(\xi,\zeta_1') , f_p(\xi,\kappa_1(\xi,\zeta_1)) \rangle
\leq - 2\,V(\xi,\zeta_1)\ .
\end{array}
\end{equation}
Global asymptotic stability of $\A_1$ (on $C_{c,1} \cup D_{c,1}$)
follows, 
from where a norm observer for $|\xi|_{\A_1}$ exists;
e.g., 
we can use 
$\eps_1 = 1-\gamma'$ and any class-$\cal K$ function $\gamma_{1}$ for the norm observer in
\eqref{eqn:NormEstimator}.
The local controller to use in mode $q=0$
is a static, continuous-time feedback of the form
$\kappa_{c,0}(\xi) := -\xi + \xi^*$.
Local asymptotic stability of $\A_0$ follows with basin of attraction
$\A_0+\eps\ball$ and $\dot z_0 = -z_0$ is a norm observer for $|\xi|_{\A_0}$.

Figure~\ref{fig:2} depicts trajectories to the plant with
the proposed hybrid supervisor 
for two different initial conditions of the state $\zeta_1$
of the  controller $\KK_1$.
The trajectories converge first to a neighborhood of $\A_1$,
and when $z_1$ becomes small enough, 
a jump to $\KK_0$ is triggered 
and the trajectories converge to $\A_0$.

\section{Conclusion}
\label{sec:conclusion}

A solution to a general uniting problem was formulated and exercised
in examples.
The controllers considered can be hybrid, nonlinear, output-feedback, and have different objectives.
The solution consists of constructing a well-posed hybrid supervisor
that appropriately combines two hybrid controllers to accomplish the task.
In addition to stability and attractivity properties, to guarantee the existence
of norm estimators, the individual
controllers are assumed to induce an output-to-state stability property.
Robustness of the full closed-loop system is asserted via
results for perturbed hybrid systems.
Examples illustrating the design methodology of the hybrid supervisor
were presented.  The proposed algorithm can also be used for waypoint navigation 
and loitering control of unmanned aerial vehicles \cite{Smith.Sanfelice.13.GNC}.
The proposed solution does not assume a detectability property for the 
plant
 and thus, in contrast to \cite{PrieurTeel:ieee:11},  a 
global norm observer may not exist. 
When this stronger property is assumed, 
the proposed hybrid supervisor achieves robust, global asymptotic stability.
Moreover, the attractivity property in Assumption \ref{assumption:1} can be 
relaxed to a semi-global, practical attractivity property.
\balance

\bibliographystyle{plain}
\bibliography{long,Biblio,RGS,added-cp}

\end{document}